\theoremstyle{plain}
\newtheorem{theorem}{Theorem}
\newtheorem{corollary}[theorem]{Corollary}
\newtheorem{definition}[theorem]{Definition}
\newtheorem{proposition}[theorem]{Proposition}
\newcommand{\comment}[1]{}
\newcommand{\ket}[1]{| #1 \rangle}
\begin{document}

\title{Rigidity of the magic pentagram game}
\author{Amir Kalev}
\email{amirk@umd.edu}
\affiliation{Joint Center for Quantum Information and Computer Science, University of Maryland, College Park, MD 20742-2420, USA}
\author{Carl A. Miller}
\email{camiller@umd.edu}
\affiliation{Joint Center for Quantum Information and Computer Science, University of Maryland, College Park, MD 20742-2420, USA}
\affiliation{National Institute of Standards and Technology, Gaithersburg, MD 20899, USA}
\date{\today}

\begin{abstract}
A game is rigid if a near-optimal score guarantees, under the sole assumption of the validity of quantum mechanics, that the players are using an approximately unique quantum strategy.  Rigidity has a vital role in quantum cryptography as it permits a strictly classical user to trust behavior in the quantum realm.  This property can be traced back as far as 1998 (Mayers and Yao) and has been proved for multiple classes of games. In this paper we prove ridigity for the magic pentagram game, a simple binary constraint satisfaction game involving two players, five clauses and ten variables.  We show that all near-optimal strategies for the pentagram game are approximately equivalent to
a unique strategy involving real Pauli measurements on three maximally-entangled qubit  pairs.
\end{abstract}

\maketitle
\section{Introduction}\label{sec:intro}
Quantum rigidity is a strengthening of the guarantee that quantum behavior is taking place. It essentially ascertains that observing certain correlations in a system, for example, correlations that violate Bell inequalities, is sufficient by itself to determine the quantum state and the measurements used to obtain these correlations.   This notion was expressed in the work of Mayers and Yao on ``self-checking quantum sources''  \cite{mayers1998quantum} in 1998, and it can be traced back even earlier \cite{popescu1992states, Summers1987}.
Rigidity is a central tool for quantum computational protocols that involve untrusted devices,
since it allows a user to verify the internal workings of a device based only on its
external behavior (see, e.g.,  \cite{reichardt2013classical}).

Since its introduction the notion of rigidity has seen good deal of work, 
generally focused either on proving
rigidity for particular classes of games, or proving that rigid games exist that self-test particular quantum states.  Two-player games
that are known to be rigid include the CHSH game \cite{popescu1992states, mckague2012robust},  the magic square game \cite{Wu16}, the chained Bell inequalities \cite{Supic2016}, the Mayers-Yao criterion \cite{mayers1998quantum, Magniez:2006}, Hardy's test \cite{Rabelo2012},  
the Hadamard-graph coloring game \cite{mancinska2015maximally},
and various classes of binary games \cite{miller2013optimal, Wang2016, Bamps2015}.  New results on rigid games add to the  tools available
for protocols based on untrusted devices.

In the current paper we prove that the magic pentagram game (see Figure~\ref{fig:game}) is rigid.  This game is a natural one to
study: in particular, it was originally proposed alongside
the magic square game \cite{Mermin90}, and it shares some of the same properties that make the magic square
game useful in cryptography (in particular, it shares the property that an optimal strategy must yield a perfect shared key bit pair
between two parties, which was exploited in \cite{Jain:2017}).  From a resource standpoint, it also offers an improvement over the magic
square game: whereas the magic square game requires $9$ questions to self-test $2$ EPR pairs, we will prove that the magic pentagram
game self-tests $3$ EPR pairs with $20$ questions.  If we compare the number of bits of randomness needed to generate the questions set
to the number EPR pairs tested, the magic square has a ratio of $\frac1{2}\log_2{9} \approx 1.58$, while the magic pentagram
game has a ratio of $\frac{1}{3} \log_2{20} \approx 1.44$.

The optimal strategy for the magic pentagram game is shown in Figure~\ref{fig:strategy}.  
Our main result is summarized below, and proved formally in Propositions~\ref{prop:a_approx}, \ref{prop:b_approx},
\ref{prop:state_approx}, and Corollary~\ref{cor:approx}.

\begin{theorem}[Informal] Suppose that Alice and Bob have a strategy for the magic pentagram game that wins
\label{thm:informal}
with probability $1-\epsilon$.  Then, after the application of a local isometry on Alice's and Bob's systems, the following
statements hold.
\begin{enumerate}
\item The shared state is within Euclidean distance $O ( \sqrt{\epsilon} )$ from a state of the form $(\Phi^+)^{\otimes 3} \otimes \left| junk \right>$,
where $\Phi^+$ denotes a Bell state and $\left| junk \right>$ denotes an arbitrary bipartite state.  (Proposition~\ref{prop:state_approx}.)

\item The post-measurement states under Alice's and Bob's measurements are approximated (up to $O ( \sqrt{\epsilon} )$)
by the corresponding post-measurement states from the strategy in Figure~\ref{fig:strategy}.  (Propositions~\ref{prop:a_approx}--\ref{prop:b_approx} and Corollary~\ref{cor:approx}.)
\end{enumerate}
\end{theorem}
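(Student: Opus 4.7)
My plan is to follow the standard rigidity template: first derive approximate algebraic relations among the measurement operators, then feed those relations into a Swap-type isometry. For each vertex $v$ of the pentagram and each of the two lines $L$ containing it, Alice's four-outcome projective measurement defines a $\pm 1$ observable $A_v^{(L)}$ (the product of the projectors weighted by the bit reported for $v$). Let $B_v$ denote Bob's observable when he is asked vertex $v$. Winning with probability $1-\epsilon$ immediately yields the state-dependent estimates $A_v^{(L)} \ket{\psi} \approx B_v \ket{\psi}$ (agreement on the shared bit) and, by chaining through Bob, $A_v^{(L)}\ket{\psi} \approx A_v^{(L')}\ket{\psi}$ for the two lines through $v$; each error is $O(\sqrt{\epsilon})$. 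I can therefore treat Alice as having, up to $O(\sqrt{\epsilon})$, a single observable $A_v$ per vertex. The line constraints give $\prod_{v \in L} A_v \ket{\psi} \approx \pm \ket{\psi}$ with one line contributing the $-$ sign, and commutativity $[A_v,A_w]=0$ exactly whenever $v,w$ lie on a common line (since they come from the same projective measurement).

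\textbf{Approximate Pauli relations.} The heart of the argument is to promote these consequences of the game into the full approximate three-qubit Pauli algebra on the support of $\ket{\psi}$. Matching the ten $A_v$ against the ten Paulis of Figure~\ref{fig:strategy} (i.e.\ $X_i,Z_i,$ and appropriate two-qubit products), I expect to derive:
\begin{enumerate}
\item $A_v^2 = I$ and $[A_v,A_w]\ket{\psi}\approx 0$ whenever the corresponding Paulis commute (directly from the ``co-linear'' relation, possibly via an intermediate vertex);
\item $\{A_v,A_w\}\ket{\psi}\approx 0$ whenever the corresponding Paulis anticommute, obtained by combining the $-I$ line with products of $+I$ lines so that two anticommuting factors remain after cancellations;
\item analogous relations on Bob's side transferred via the approximation $A_v \ket{\psi} \approx B_v\ket{\psi}$ and the fact that $A_v$ commutes with any $B_w$.
\end{enumerate}
Propagating $O(\sqrt{\epsilon})$ errors through these manipulations uses only the standard triangle-inequality lemmas for state-dependent norms, so the relations hold up to $O(\sqrt{\epsilon})$.

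\textbf{Isometry and post-measurement states.} Once I have three pairs of anticommuting observables on each side (say, the images of $X_1,Z_1,X_2,Z_2,X_3,Z_3$), I apply the McKague--Yang--Scarani Swap isometry three times, once per ``logical qubit,'' on both Alice and Bob; the approximate Pauli relations then show that the extracted register is $O(\sqrt{\epsilon})$-close to $(\Phi^+)^{\otimes 3}$ in Euclidean norm, with the residual state playing the role of $\ket{junk}$. This gives statement~(1). For statement~(2), I track the images of the $A_v$ and $B_v$ through the isometry: because the isometry is built from these observables, each $A_v$ (resp.\ $B_v$) is sent, up to $O(\sqrt{\epsilon})$, to the corresponding tensor product of real Paulis acting on the three extracted qubits, so the post-measurement branches match the ideal ones with the same error bound.

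\textbf{Main obstacle.} The routine parts are the agreement/consistency estimates and the final isometry bookkeeping; the hard step is the anticommutation derivation. Unlike the magic square, where the unique $-I$ row pairs cleanly with a column of opposite parity, the pentagram's five lines and ten vertices produce many candidate algebraic identities, and I need to choose a specific cover of the five relations by products whose non-cancelled factors are exactly the anticommuting Pauli pairs I want. Finding that combinatorial cover, and verifying it propagates through both Alice's and Bob's observables with only constant blow-up of the $\sqrt{\epsilon}$ error, is where I expect the real work to lie.
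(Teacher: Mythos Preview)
Your proposal is correct and follows essentially the same route as the paper: derive the switching estimate $A_v^{(L)}\ket{\psi}\approx B_v\ket{\psi}$, use it to collapse Alice's two context-dependent observables per vertex into one, prove approximate (anti)commutation, and then run a threefold Swap isometry to extract $(\Phi^+)^{\otimes 3}$. The paper's only substantive addition to your outline is that it exhibits the explicit combinatorial chain for anticommutation you flag as the main obstacle (e.g.\ $R_7R_3L \approx (R_4R_9R_6)(R_6R_8R_5)L \approx \cdots \approx -R_3R_7L$), confirming that such a cover exists and costs only a constant factor in $\sqrt{\epsilon}$.
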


Our proof is self-contained and borrows techniques from previous papers on rigidity \cite{mckague2012robust,
McKague16, Wu16}.
One of the challenges for the magic pentagram game is that the first player may associate two different measurements to a single observable  --- for example, in Figure~\ref{fig:game}, Alice may use a different measurement for vertex $1$ depending on whether the context is $G$ or $D$.  (This does not occur
in the magic square game.)  Our early technical work addresses this fact  --- see Propositions~\ref{prop:change}--\ref{prop:comm}
and the discussion that follows.

The coefficients of the error terms $O ( \sqrt{\epsilon} )$ for Theorem~\ref{thm:informal} are not given explicitly, and optimizing
these coefficients is left as an open problem.  (Tracing through the steps of the current proof might yield coefficients in the thousands.)

In the larger picture, the magic square game and the magic pentagram game are examples of binary constraint satisfaction
games \cite{Cleve2014}.
Arkhipov~\cite{Arkhipov12} proved that a certain natural subclass of 
binary constraint satisfaction problems --- specifically, those that are based on XOR clauses where every variable
is in exactly two clauses --- are all in a precise sense reducible to the magic square game and the magic pentagram game.
This suggests that our result is a step towards a full classification of winning quantum strategies within this class.

\section{The magic pentagram game}\label{sec:game}
The  pentagram game is a binary constraint satisfaction game between two parties, Alice and Bob. Its rules can be defined, as its name suggests, on a pentagram hypergraph, see Fig.~\ref{fig:game}.  The five hyperedges of the pentagram (the clauses or contexts) are labeled 
$C, D, E, F, G$, and each contains four vertices.  The hyperedges are each assigned a value: $\ell ( C ) = \ell (D ) = \ell ( E ) = \ell ( F ) = 1$, and $\ell ( G ) = -1$.  
The rules of the games are as follows: 
\begin{itemize}
\item{A context $j$ is chosen and a vertex $v \in j$ is chosen (both uniformly at random).  The context $j$ is given
to Alice and the vertex $v$ is given to Bob.}
\item{Alice assigns either $+1$ or $-1$ to each vertex in the context $j$, and Bob assigns $+1$ or $-1$ to $v$.} 
\item{Alice and Bob can communicate and agree on a strategy prior to the beginning of the game, but are not allowed to communicate once the game has begun.} 
\end{itemize}
The game is won if the following two conditions both hold:
\begin{itemize}
\item{The product of the values returned by Alice is equal to the pre-assigned value $\ell ( j )$.
}
\item{Alice and Bob return the same value for $v$.}
\end{itemize}

\begin{figure}[t]
\centering
\includegraphics[width=\linewidth]{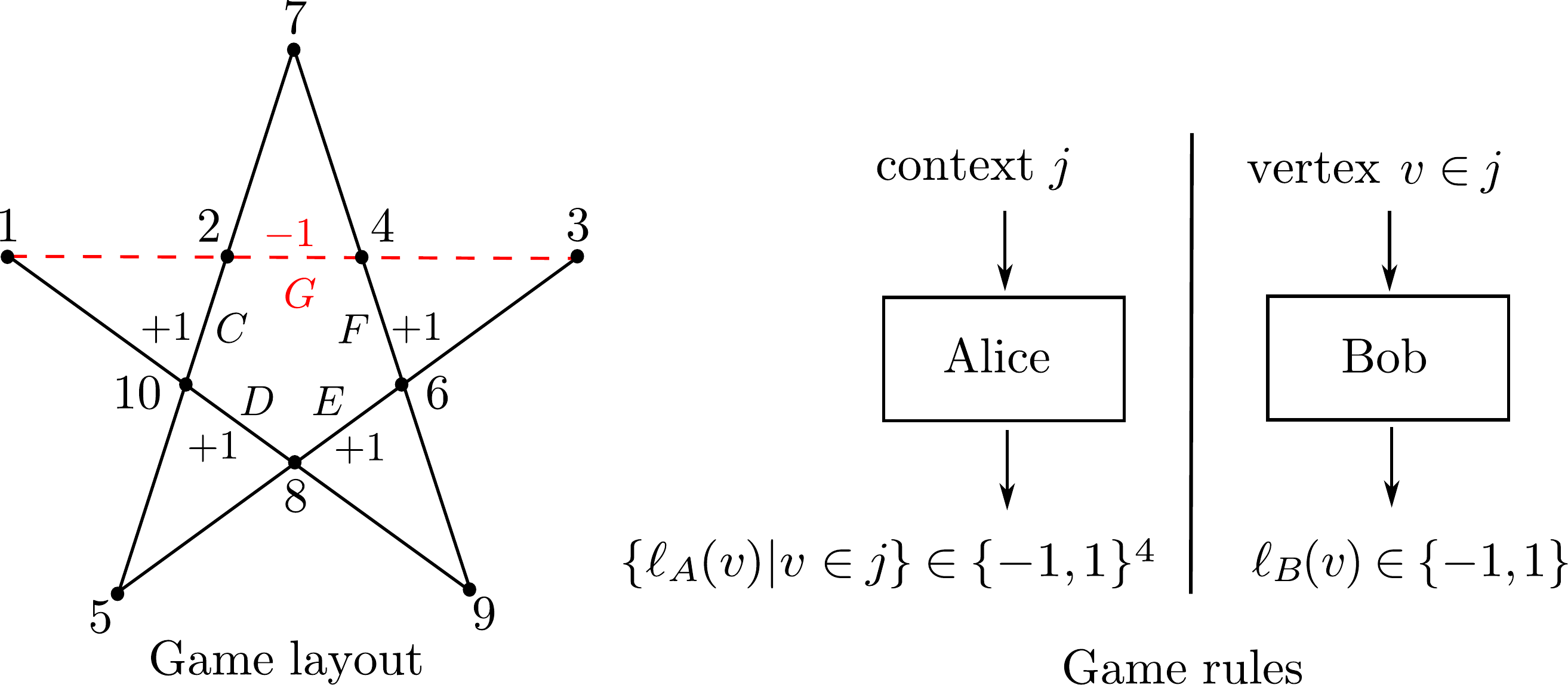}
\caption{\bf The pentagram game.}
\label{fig:game}
\end{figure}

There is no classical strategy to win this game perfectly, as is easily verified. However, it can be won with probability 1 using quantum resources~\cite{Mermin90,Mermin93}. A winning strategy is schematically shown in Fig.~\ref{fig:strategy}, with $Z, X$ and $I$ denoting the Pauli operators $\sigma_z,\sigma_x$, and the identity operator, respectively. They share six qubits, three at Alice's lab ($Q_1Q_2Q_3$) and three at Bob's ($Q_4Q_5Q_6$),  prepared in the maximally entangled state
\begin{equation}\label{3singlets}
\ket{\Phi^+}^{\otimes3}=\bigotimes_{i=1}^3\frac1{\sqrt{2}}(\ket{0}_{Q_i}\ket{0}_{Q_{i+3}}+\ket{1}_{Q_{i}}\ket{1}_{Q_{i+3}}),
\end{equation}
where $\ket{0},\ket{1}$ are the eigenbasis of the Pauli $Z$ operator. (When no confusion arises we drop the tensor product symbol and the subscript labels for Alice and Bob's subsystems.)  Upon receiving a hyperedge label $j$, Alice measures the four Pauli observables associated with the four vertices of $j$  on her three qubits, and then assigns to each vertex the value she obtains for the corresponding observable. These observables are reflection operators (i.e., Hermitian operators having eigenvalues in $\{ -1, +1 \}$) such that observables of adjacent vertices (vertices that are connected by the same hyperedge) all commute and thus can be measured simultaneously. Bob measures the observable of his input vertex on his three qubit system and assigns a $\{ -1, +1 \}$ value to the vertex according to the outcome of his measurement. By construction of this strategy, the winning conditions for this game, as listed above, are fulfilled for every input value $j$ and $v$.

We note that in this strategy any two non-adjacent observables  anti-commute.  (This will become important in later proofs.)
\begin{figure}[t]
\centering
\includegraphics[width=\linewidth]{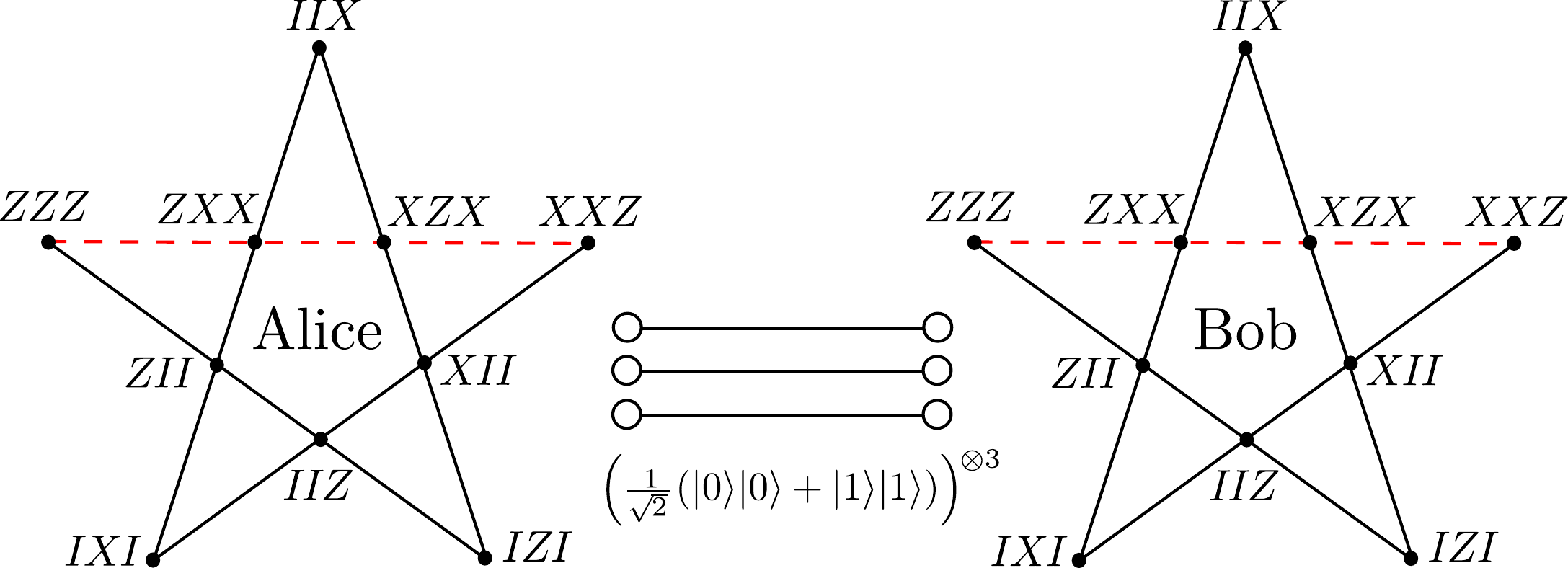}
\caption{{\bf A winning strategy.} }
\label{fig:strategy}
\end{figure}

\section{Strategies for the magic pentagram game}\label{sec:strategies}

Our goal is to relate arbitrary strategies for the magic pentagram game 
to the strategy in Figure~\ref{fig:strategy}.
The class of strategies that we study are captured in the following definition.

\begin{definition}
A \textit{projective strategy} for the magic pentagram game consists of the following data:
\begin{enumerate}
\item \textbf{The shared state:} Two finite-dimensional Hilbert spaces $\mathcal{H}_A$ and $\mathcal{H}_B$, and
a unit vector $\left| \psi \right> \in \mathcal{H}_A \otimes \mathcal{H}_B$.  

\item \textbf{Alice's measurements:} For each
$j \in \{ C, D, E, F, G \}$, a projective measurement $\{ M^j_t  \}$ on $\mathcal{H}_A$, where
$t$ varies over the set of all functions from $j$ to $\{ 0, 1 \}$ whose parity is equal to $\ell ( j)$.

\item \textbf{Bob's measurements:} For vertex $v$, a projective measurement
$\{ N_v^s \}_{s \in 0, 1 }$.
\end{enumerate}
\end{definition}
The functions obtained from these measurements specify the output values for Alice and Bob.  Note that we could have allowed for the shared state to be mixed and for the
measurements to be general positive-operator valued measures (POVMs).  However standard techniques
imply that any such strategy is a partial trace of one in the above form, so there is no generality lost. 

Additionally, we make the following definition.
\begin{definition}
\label{def:refl}
A \textit{reflection} is a Hermitian automorphism whose eigenvalues are contained in $\{ -1, +1 \}$.
A \textit{reflection strategy} for the magic pentagram game consists of the following data:
\begin{enumerate}
\item \textbf{The shared state:} Two finite-dimensional Hilbert spaces $\mathcal{H}_A$ and $\mathcal{H}_B$, and
a linear map $L \colon \mathcal{H}_B \to \mathcal{H}_A$ satisfying $\left\| L \right\|_2 = 1$.  

\item \textbf{Alice's reflections:} Reflections 
\begin{eqnarray*}
\{ R^j_{v} \mid j \in \{ C, D, E, F, G \}, v \in j \}
\end{eqnarray*}
on $\mathcal{H}_A$ 
such that the reflections that belong to any context $j$ all commute ($[R^j_{v},R^j_{v'}]=0$) and their product is equal to $\ell ( j ) I$.

\item \textbf{Bob's reflections:} Reflections $\{ S_v \}_v$ on $\mathcal{H}_B$.
\end{enumerate}
\end{definition}
Note that any projective strategy can be converted into a reflection strategy, and vice versa, via the relations
\begin{eqnarray}
R^j_v & = & \sum_{t( v ) = 0} M^j_t - \sum_{t( v ) = 1} M^j_t, \\
S_v & = & N_v^0 - N_v^1 \\
L & = & \langle\Phi^B\ket{\psi}
\end{eqnarray}
where $\ket{\Phi^B}=\sum_i \left| i i \right>$ on $\mathcal{H}_B$.

The probability distribution obtained from a projective measurement
$\{ O_1, \ldots, O_n \}$ on $\mathcal{H}_A$ is given by $( \left\| O_1 L \right\|_2^2 , \ldots , \left\| O_n L \right\|_2^2 )$,
and the probability distribution obtained from a projective measurement $\{ P_1 , \ldots, P_n \}$ 
on $\mathcal{H}_B$ is given by $( \left\| L  P_1 \right\|_2^2 , \ldots , \left\| L P_n \right\|_2^2 )$.
For any context $j$ and any vertex $v \in j$, the probability that Alice and Bob will assign different values to the vertex $v$ in a given reflection strategy is given by
\begin{eqnarray*}
\left\| \left( \frac{ I + R_v^j }{2} \right) L \left( \frac{ I - S_v }{2} \right) \right\|_2^2 + 
\left\| \left( \frac{ I - R_v^j }{2} \right) L \left( \frac{ I + S_v }{2} \right) \right\|_2^2.
\end{eqnarray*}
Thus the losing probability (that is, one minus the expected score) for the reflection strategy is
given by 
\begin{eqnarray*}
p_{\rm lose}& = & \frac{1}{20} \sum_{v \in j} \left(  
\left\| \left( \frac{ I + R_v^j }{2} \right) L \left( \frac{ I - S_v }{2} \right) \right\|_2^2 \right. \\
&& \left. +  \left\| \left( \frac{ I - R_v^j }{2} \right) L \left( \frac{ I + S_v }{2} \right) \right\|_2^2 \right) \\
& = & \frac{1}{20} \sum_{v \in j} \left\| L - R_v^j L S_v \right\|_2^2 \\
& = & \frac{1}{20} \sum_{v \in j} \left\| R_v^j L - L S_v \right\|_2^2.
\end{eqnarray*}

Thus we have the following.
\begin{proposition}
\label{prop:switch}
Let $( L, \left\{ R_v^j \right\}, \left\{ S_v \right\} )$ be a reflection strategy for 
the magic pentagram game which achieves winning probability $1 - \epsilon$.  Then,
for any context $j$ and vertex $v \in j$,
\begin{eqnarray}
\left\| R_v^j L - L S_v \right\|_2 & \leq & O ( \sqrt{ \epsilon } ),
\end{eqnarray}
\end{proposition}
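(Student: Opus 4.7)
The proof of Proposition~\ref{prop:switch} should be essentially immediate from the computation of $p_{\rm lose}$ carried out just before the statement. The plan is to read off a single term of that sum and bound it by the whole sum.

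First, I would note that since the winning probability is $1 - \epsilon$, the losing probability equals $\epsilon$. Then I would invoke the identity derived in the excerpt,
\begin{eqnarray*}
\epsilon \;=\; p_{\rm lose} \;=\; \frac{1}{20} \sum_{j}\sum_{v \in j} \left\| R_v^j L - L S_v \right\|_2^2,
\end{eqnarray*}
where the outer sum runs over the five contexts and the inner sum over the four vertices of each context, giving 20 non-negative terms in total. Since each term is a squared norm, dropping all but one gives, for any fixed context $j_0$ and vertex $v_0 \in j_0$,
\begin{eqnarray*}
\left\| R_{v_0}^{j_0} L - L S_{v_0} \right\|_2^2 \;\leq\; 20 \, p_{\rm lose} \;=\; 20 \epsilon.
\end{eqnarray*}
Taking square roots yields $\left\| R_{v_0}^{j_0} L - L S_{v_0} \right\|_2 \leq \sqrt{20\epsilon} = O(\sqrt{\epsilon})$, which is the claimed bound.

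There is no serious obstacle here: all the algebraic work has already been done in the derivation of $p_{\rm lose}$ above the statement, where the authors recognized the pair of squared norms as $\left\| L - R_v^j L S_v \right\|_2^2$ (using that $R_v^j$ and $S_v$ are reflections, so $(I \pm R_v^j)/2$ and $(I \pm S_v)/2$ are orthogonal projections summing to the identity and the cross terms vanish) and then as $\left\| R_v^j L - L S_v \right\|_2^2$ (multiplying through by $R_v^j$ on the left, which is an isometry since $(R_v^j)^2 = I$). The only content in the proposition itself is the trivial passage from an average-case bound on $\epsilon$ to a worst-case bound on each term, costing a constant factor of $20$.
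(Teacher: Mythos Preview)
Your proposal is correct and matches the paper's approach exactly: the paper does not even supply a separate proof, but simply writes ``Thus we have the following'' after the $p_{\rm lose}$ computation, so the intended argument is precisely the one you give --- extract a single nonnegative term from the sum and absorb the factor $20$ into the $O(\cdot)$.
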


Next we prove a series of properties for near-optimal strategies,
all of which are consequences of Proposition~\ref{prop:switch}.

\begin{proposition}[Changing contexts]
\label{prop:change}
Let
\begin{eqnarray*}
(L, \{ R_v^j \} , \{ S_v \} )
\end{eqnarray*}
be a reflection
strategy with expected score $1 - \epsilon$.  
Let $v_1, \ldots, v_n$ be a sequence of vertices
and $j_1, \ldots, j_n$ and $j'_1, \ldots, j'_n$ be sequences
of contexts such that $v_i \in j_i \cap j'_i$ for all $i$.  Then,
\begin{eqnarray*}
\left\| R_{v_1}^{j_1} R_{v_2}^{j_2} \cdots R_{v_n}^{j_n} L -
R_{v_1}^{j'_1} R_{v_2}^{j'_2} \cdots R_{v_n}^{j'_n} L \right\|_2 & \leq & O ( n \sqrt{\epsilon } ).
\end{eqnarray*}
\end{proposition}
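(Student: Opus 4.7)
The plan is to interpose the vector $L S_{v_1} S_{v_2} \cdots S_{v_n}$ between the two sides of the inequality and then show, via a telescoping argument, that each of $R_{v_1}^{j_1} \cdots R_{v_n}^{j_n} L$ and $R_{v_1}^{j'_1} \cdots R_{v_n}^{j'_n} L$ is within $O(n\sqrt{\epsilon})$ of that common target. The desired bound then follows from the triangle inequality. The role of Proposition~\ref{prop:switch} is to let us ``migrate'' a single Alice reflection across $L$ into a Bob reflection on the right, at a cost of $O(\sqrt{\epsilon})$ per migration.

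Concretely, I would define the hybrid vectors
\begin{equation*}
X_k \;=\; R_{v_1}^{j_1} R_{v_2}^{j_2} \cdots R_{v_k}^{j_k} \, L \, S_{v_{k+1}} S_{v_{k+2}} \cdots S_{v_n},
\end{equation*}
for $k = 0, 1, \ldots, n$, so that $X_n = R_{v_1}^{j_1} \cdots R_{v_n}^{j_n} L$ and $X_0 = L S_{v_1} \cdots S_{v_n}$. The consecutive difference factors as
\begin{equation*}
X_k - X_{k-1} \;=\; R_{v_1}^{j_1} \cdots R_{v_{k-1}}^{j_{k-1}} \bigl( R_{v_k}^{j_k} L - L S_{v_k} \bigr) S_{v_{k+1}} \cdots S_{v_n}.
\end{equation*}
Because every $R_v^j$ and $S_v$ is a reflection and therefore has operator norm $1$, the submultiplicative inequality $\|ABC\|_2 \le \|A\|_{\mathrm{op}} \|B\|_2 \|C\|_{\mathrm{op}}$ for the Hilbert--Schmidt norm gives $\|X_k - X_{k-1}\|_2 \le \|R_{v_k}^{j_k} L - L S_{v_k}\|_2 \le O(\sqrt{\epsilon})$ by Proposition~\ref{prop:switch}. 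Summing over $k$ yields $\|X_n - X_0\|_2 \le O(n\sqrt{\epsilon})$.

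Running the exact same argument with the primed sequence $j'_1, \ldots, j'_n$ (this is legal since we only use $v_i \in j'_i$) shows that $R_{v_1}^{j'_1} \cdots R_{v_n}^{j'_n} L$ is within $O(n\sqrt{\epsilon})$ of the same vector $L S_{v_1} \cdots S_{v_n}$. Applying the triangle inequality completes the proof.

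There is no real obstacle here beyond bookkeeping: the key ingredients (the reflection property giving unit operator norm, and Proposition~\ref{prop:switch} giving a uniform migration cost independent of which context is attached to $v$) are already in place. The mildly delicate point to check is that the submultiplicative estimate is used on the correct norm; since the $R$'s and $S$'s have $\|\cdot\|_{\mathrm{op}} \le 1$ while $L$ is measured in $\|\cdot\|_2$, everything sits in the right slot and the error accumulates only additively in $n$.
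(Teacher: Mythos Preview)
Your proof is correct and follows exactly the approach of the paper: both products are shown to lie within $O(n\sqrt{\epsilon})$ of the common anchor $L S_{v_1}\cdots S_{v_n}$ by iterated use of Proposition~\ref{prop:switch}, and the result follows by the triangle inequality. Your write-up simply makes the telescoping explicit where the paper leaves it to the reader.
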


\begin{proof}
Applying Proposition~\ref{prop:switch} inductively, we find that $R_{v_1}^{j_1}  \cdots R_{v_n}^{j_n} L$
and $R_{v_1}^{j'_1}  \cdots R_{v_n}^{j'_n} L$ are both within Euclidean distance
$O ( n \sqrt{\epsilon } )$ from $L S_{v_1} \cdots S_{v_n}$.
\end{proof}

The next two propositions certify the relation between reflection operators in a strategy with expected score $1 - \epsilon$. For convenience, hereafter we refer to sequences
$T_1, \ldots, T_n$ of matrices satisfying $\left\| T_{i+1} - T_i \right\|_2 \leq \delta$
as \textit{$\delta$-approximate sequences}. 
\begin{proposition}[Approximate commutativity]
\label{prop:comm}
Let $(L, \{ R_v^j \} , \{ S_v\} )$ be a reflection
strategy with expected score $1 - \epsilon$. Let $v$ and $w$ be adjacent vertices, such that $v,w\in j$, and let ${j'} \neq j$ be the other hyperedge which contains $w$.  Then,
\begin{eqnarray}
\left\| R_v^j R_w^{j'} L - R_w^{j'} R_v^j L \right\|_2 & \leq & O ( \sqrt{\epsilon } ) \\
\left\| L S_w S_v - L S_v S_w \right\|_2 & \leq & O ( \sqrt{\epsilon } ).
\end{eqnarray}
\end{proposition}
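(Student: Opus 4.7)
Both estimates rest on a single observation: within the shared context $j$, the reflections $R_v^j$ and $R_w^j$ commute \emph{exactly} by Definition~\ref{def:refl}. The plan is to massage each expression so as to expose this exact commutativity, paying only an $O(\sqrt{\epsilon})$ price for the adjustments, using Propositions~\ref{prop:switch} and~\ref{prop:change}. For the first inequality, I apply Proposition~\ref{prop:change} to the length-two vertex sequence $(v,w)$ with context sequences $(j,j')$ versus $(j,j)$---allowed because $w\in j\cap j'$ and $v\in j$---to obtain
\begin{equation*}
\bigl\| R_v^j R_w^{j'} L - R_v^j R_w^j L \bigr\|_2 \leq O(\sqrt{\epsilon}).
\end{equation*}
Exact commutativity then rewrites $R_v^j R_w^j L$ as $R_w^j R_v^j L$, and a second application of Proposition~\ref{prop:change} (to the sequence $(w,v)$ with contexts $(j,j)$ versus $(j',j)$) replaces $R_w^j$ by $R_w^{j'}$ at the cost of another $O(\sqrt{\epsilon})$. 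Three triangle inequalities conclude.

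\textbf{Second inequality.} The plan is to transport $S_w S_v$ (and $S_v S_w$) across $L$ onto Alice's side, where commutativity is again exact. Because reflections have operator norm $1$, left- or right-multiplication of an estimate $\|R_x L - LS_x\|_2\leq O(\sqrt\epsilon)$ by another reflection preserves the bound. Starting from Proposition~\ref{prop:switch} and applying this remark, I chain
\begin{equation*}
R_w^j R_v^j L \;\approx\; R_w^j L S_v \;\approx\; L S_w S_v,
\end{equation*}
each step being valid up to $O(\sqrt\epsilon)$ in $\|\cdot\|_2$. The symmetric derivation produces $\|R_v^j R_w^j L - L S_v S_w\|_2 \leq O(\sqrt{\epsilon})$. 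Combining these two estimates with the exact identity $R_v^j R_w^j = R_w^j R_v^j$ and the triangle inequality yields the claim.

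\textbf{Anticipated obstacle.} I do not foresee a serious obstacle: the whole proof is a short telescoping through the exact commutativity available at Alice's side, with every step reducing either to Proposition~\ref{prop:switch}, Proposition~\ref{prop:change}, or multiplication by a unitary reflection. The only subtlety worth flagging is that one must use the fact that reflections have operator norm one to ensure left/right multiplication of an approximate equation does not inflate the $\|\cdot\|_2$ error.
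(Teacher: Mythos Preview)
Your proof is correct and matches the paper's approach. The paper's own proof is a one-liner (``follows easily by applications of Proposition~\ref{prop:switch}''), and what you have written is precisely the natural unpacking of that sentence: use the switching relations (packaged as Proposition~\ref{prop:change}) to replace $R_w^{j'}$ by $R_w^{j}$, invoke the exact commutativity in context $j$ from Definition~\ref{def:refl}, and switch back; for the $S$-side, transport to Alice via Proposition~\ref{prop:switch}, commute exactly, and transport back.
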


\begin{proof}
The desired result follows
easily by applications of Proposition~\ref{prop:switch}.
\end{proof}

Each vertex $v$ has two reflection operators for Alice
($R_v^j$ and $R_v^k$, where $j \cap k = \{ v \}$).  It is helpful
for some of the proofs that follow to single out one distinguished reflection operator
for each vertex.  We therefore make the following (arbitrary) assignments,
\begin{eqnarray}\label{R1toR10}
\begin{array}{ccc} 
R_1  :=  R_1^G & \hskip0.5in & R_6  :=  R_6^E \\
R_2  :=  R_2^G && R_7  :=  R_7^F \\
R_3  :=  R_3^E && R_8  :=  R_8^D \\
R_4  :=  R_4^F && R_9 := R_9^D \\
R_5  :=  R_5^E && R_{10} := R_{10}^C \,. 
\end{array}
\end{eqnarray}

\begin{proposition}[Approximate anti-commutativity]
\label{prop:anti}
Let $(L, \{ R_v^j \} , \{ S_v \} )$ be a reflection
strategy with expected score $1 - \epsilon$, 
and let $v\in j$ and $w\in j'$  be non-adjacent vertices (i.e., vertices
that never occur in the same context).  Then,
\begin{eqnarray}
\left\| R_v^j R_w^{j'} L + R_w^{j'} R_v^j L \right\|_2 & \leq & O ( \sqrt{\epsilon } ) \\
\label{santi}
\left\| L S_w S_v + L S_v S_w \right\|_2 & \leq & O ( \sqrt{\epsilon } ).
\end{eqnarray}
\end{proposition}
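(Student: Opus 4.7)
The plan is to first reduce the statement to a Bob-side anti-commutator, then exploit the exact Alice-side context identities together with Propositions~\ref{prop:change} and \ref{prop:comm}. By applying Proposition~\ref{prop:switch} twice, $R_v^j R_w^{j'} L$ is within $O ( \sqrt{\epsilon} )$ of $L S_v S_w$, and $R_w^{j'} R_v^j L$ is within $O ( \sqrt{\epsilon} )$ of $L S_w S_v$, so it suffices to show $\| L ( S_v S_w + S_w S_v ) \|_2 \leq O ( \sqrt{\epsilon} )$; the Alice-side bound then follows by the same reduction.

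For the Bob-side estimate, let $u$ denote the unique vertex in $j \cap j'$, and write $j = \{ v, u, a, b \}$ and $j' = \{ w, u, c, d \}$. The context-product identity $R_v^j R_u^j R_a^j R_b^j = \ell ( j ) I$, combined with the exact pairwise commutativity inside $j$, gives $R_v^j = \ell ( j ) R_u^j R_a^j R_b^j$; pushing through $L$ via Proposition~\ref{prop:switch} then yields the Bob-side expansion $L S_v \approx \ell ( j ) L S_u S_a S_b$, and symmetrically $L S_w \approx \ell ( j' ) L S_u S_c S_d$. Before substituting, I would upgrade Proposition~\ref{prop:comm} to a form that permits commuting adjacent pairs at any position inside a word acting on $L$: this extension is valid because the Frobenius norm is invariant under unitary multiplications, and any string of $S$'s on Bob's side can be shuttled to the Alice side (as a string of $R$'s) by iterated use of Proposition~\ref{prop:switch}. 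Crucially, since $u$ is adjacent via $j$ or $j'$ to each of $v, w, a, b, c, d$, the reflection $S_u$ approximately commutes with every other letter that appears after the expansion.

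Substituting the two expansions into $L ( S_v S_w + S_w S_v )$, commuting the two copies of $S_u$ together, and using $S_u^2 = I$ reduces the problem to bounding a combination of products in the auxiliary vertices $\{ a, b, c, d \}$. To close the argument I would use the remaining pentagram structure --- in particular the context-product identity for the fifth hyperedge (the one not incident to $v$ or $w$) together with Proposition~\ref{prop:change} to swap context labels freely on the Alice side. I expect this last step to be the main obstacle, since context-product identities and pairwise commutations considered in isolation only reproduce the tautology $L ( S_v S_w + S_w S_v ) \approx L ( S_v S_w + S_w S_v )$; the extra minus sign required for anti-commutation must ultimately trace back to the global pentagram identity $\prod_j \ell ( j ) = -1$, and extracting it with $O ( \sqrt{\epsilon} )$ precision is the technical heart of the proof.
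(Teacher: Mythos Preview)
Your proposal correctly identifies the ingredients but stops precisely at the step that carries the content, and you say so yourself. After expanding $S_v$ and $S_w$ via the two incident contexts $j,j'$ and cancelling $S_u^2$, you are left (up to the sign $\ell(j)\ell(j')$) with
\[
L\bigl(S_aS_bS_cS_d + S_cS_dS_aS_b\bigr).
\]
In the pentagram three of the four cross-pairs in $\{a,b\}\times\{c,d\}$ are non-adjacent, so this new expression is another instance of the very anti-commutation you are trying to establish; invoking only ``the fifth hyperedge'' does not break the circularity, since that hyperedge meets $\{a,b,c,d\}$ in just two of the four letters and introduces two fresh ones. Your outline therefore reduces the problem to itself and leaves the extraction of the minus sign unproved.

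The paper closes this gap with an explicit chain on Alice's side that traverses \emph{all five} contexts. For the representative pair $v=7$, $w=3$ it writes down an $O(\sqrt{\epsilon})$-approximate sequence starting at $R_7R_3L$, expands $R_7$ and $R_3$ via their defining contexts, cancels a square, then successively re-expands intermediate factors using each of the three remaining contexts in turn, arriving at $-R_3R_7L$; the single minus sign enters at the step that invokes the unique context $G$ with $\ell(G)=-1$. The only tools are the exact within-context product identities of Definition~\ref{def:refl}, Proposition~\ref{prop:change} to relabel contexts, and cancellation $R^2=I$; no appeal to Proposition~\ref{prop:comm} or to any already-proved anti-commutation is required. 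What your write-up is missing is exactly such a traversal, ordered so that every substitution is either exact or covered by Proposition~\ref{prop:change}.
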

\begin{proof}
By Proposition~\ref{prop:change}, it suffices to prove these relations
with $R_v^j, R_w^{j'}$ replaced by $R_v, R_w$.  
We give a proof for $v = 7, w = 3$, which generalizes to cover
all other cases by symmetry.  The proof is inspired
by the proof of rigidity for the magic square game \cite{Wu16}.  Applying the rules for Alice's measurements
from Definition~\ref{def:refl} and the foregoing propositions, we find that the following
sequence is an $O( \sqrt{\epsilon} )$-approximate sequence:
\begin{eqnarray*}
&& R_7 R_3 L,  \\
&&(R_4 R_9 R_6 ) (R_6 R_8 R_5 ) L, \\
&&R_4 R_9 R_8 R_5 L, \\
&&R_4 (R_1 R_{10} R_8) R_8 R_5 L, \\
&&R_4 R_1 R_{10} R_5 L , \\
&&R_4 R_1 (R_2 R_7 ) L , \\
&&- R_3 R_7 L,
\end{eqnarray*}
and relation (\ref{santi}) follows similarly.
\end{proof}

The next proposition follows from Propositions~\ref{prop:switch}, \ref{prop:comm}, and \ref{prop:anti}.
\begin{proposition}
Let $v_1\in j_1, v_2\in j_2, \ldots, v_n\in j_n$ be a sequence of vertices
and $i \in \{ 1, 2, \ldots, n-1 \}$.  Then,
\begin{eqnarray*}
\left\| R_{v_1}^{j_1} \cdots R_{v_i }^{j_i} R_{v_{i+1} }^{j_{i+1}} \cdots R_{v_n}^{j_n} L \right. \\
\left. -b R_{v_1}^{j_1} \cdots  R_{v_{i+1} }^{j_{i+1}} R_{v_i }^{j_i} \cdots R_{v_n}^{j_n} L 
\right\|_2 & \leq & O ( n \sqrt{ \epsilon } ) \\
\left\| L S_{v_1} \cdots S_{v_i } S_{v_{i+1} } \cdots S_{v_n} \right. \\
\left. - b L S_{v_1} \cdots S_{v_{i+1} } S_{v_i } \cdots S_{v_n}  
\right\|_2 & \leq & O ( n \sqrt{ \epsilon } ),
\end{eqnarray*}
where $b = 1$ if $v_i, v_{i+1}$ are adjacent and $b = -1$ if $v_i, v_{i+1}$
are non-adjacent.  $\qed$
\end{proposition}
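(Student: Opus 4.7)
The plan is to reduce the claim to the two-reflection (anti-)commutativity already established in Propositions~\ref{prop:comm} and~\ref{prop:anti}. Those propositions only swap two reflections when they sit directly in front of $L$, so the strategy is to first move everything between the swap site and $L$ across the $L$-interface (converting the trailing $R$'s into $S$'s), perform the swap right next to $L$, and then push the $S$'s back. The whole argument is a bookkeeping exercise combining Propositions~\ref{prop:switch}, \ref{prop:change}, \ref{prop:comm}, and~\ref{prop:anti} through the triangle inequality, using that reflections are unitary and therefore left- and right-multiplication by them preserves $\|\cdot\|_2$.

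First I would apply Proposition~\ref{prop:switch} inductively $(n-i-1)$ times to obtain
\[
\left\| R_{v_1}^{j_1}\cdots R_{v_n}^{j_n} L \;-\; R_{v_1}^{j_1}\cdots R_{v_{i+1}}^{j_{i+1}} L\, S_{v_{i+2}}\cdots S_{v_n} \right\|_2 \leq O(n\sqrt{\epsilon}),
\]
where at each step left-multiplication by the untouched reflections (which are unitary) preserves the error. If the chosen contexts $j_i, j_{i+1}$ do not already match the hypotheses of Proposition~\ref{prop:comm} or~\ref{prop:anti}, I would first invoke Proposition~\ref{prop:change} to replace them at an additional cost of $O(\sqrt{\epsilon})$. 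The relevant proposition then gives
\[
\left\| R_{v_i}^{j_i} R_{v_{i+1}}^{j_{i+1}} L \;-\; b\, R_{v_{i+1}}^{j_{i+1}} R_{v_i}^{j_i} L \right\|_2 \leq O(\sqrt{\epsilon}),
\]
with $b=+1$ when $v_i, v_{i+1}$ are adjacent and $b=-1$ otherwise. Left-multiplying by the unitary $R_{v_1}^{j_1}\cdots R_{v_{i-1}}^{j_{i-1}}$ and right-multiplying by the unitary $S_{v_{i+2}}\cdots S_{v_n}$ preserves the bound, and then applying Proposition~\ref{prop:switch} in reverse brings the $S$'s back across $L$ as $R$'s, producing $b\, R_{v_1}^{j_1}\cdots R_{v_{i+1}}^{j_{i+1}} R_{v_i}^{j_i}\cdots R_{v_n}^{j_n} L$ within another $O(n\sqrt{\epsilon})$. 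The triangle inequality assembles the first claim.

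The $S$-version is obtained by the symmetric maneuver: transport $S_{v_1},\ldots, S_{v_{i-1}}$ across $L$ to become $R$'s on Alice's side, apply the $S$-statement of Proposition~\ref{prop:comm} or~\ref{prop:anti} adjacent to $L$, and transport the $R$'s back. The only potential obstacle is notational: one must check that the sign $b$ arises exclusively from the single (anti-)commutation step right next to $L$ (Propositions~\ref{prop:switch} and~\ref{prop:change} are sign-free) and that the $O(\sqrt{\epsilon})$ contributions accumulate linearly in $n$. Both points are automatic from the construction.
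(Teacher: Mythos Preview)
Your proposal is correct and is essentially the approach the paper intends: the paper gives no proof at all beyond the sentence ``follows from Propositions~\ref{prop:switch}, \ref{prop:comm}, and~\ref{prop:anti}'' (with the $\qed$ already attached to the statement), and what you have written is precisely the natural way to unpack that sentence. Your use of Proposition~\ref{prop:change} to align contexts before invoking Proposition~\ref{prop:comm} is a legitimate detail the paper leaves implicit (and is in any case a direct consequence of Proposition~\ref{prop:switch}, which the paper does cite).
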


\section{Rigidity}\label{sec:rigidity}

In this section, we will use the following notation:
$Q_1, \ldots, Q_6$ will denote qubit registers
(each with a fixed isomorphism to $\mathbb{C}^2$).
The linear maps $H_i \colon Q_i \to Q_i$ denote the Hadamard
maps $\left| 0 \right> \mapsto \left| + \right>, \left| 1 \right> \mapsto \left| - \right>$,
and the linear maps $X_i, Z_i \colon Q_i \to Q_i$ denote the Pauli operators.
For any reflection $U$ on $\mathcal{H}_A \otimes \mathcal{H}_B$,
and $i \in \{ 1, 2, 3, 4, 5 , 6 \}$,
let the map
\begin{eqnarray}
C_i ( U ) \colon Q_i \otimes \mathcal{H}_A \otimes \mathcal{H}_B \to Q_i \otimes \mathcal{H}_A \otimes \mathcal{H}_B
\end{eqnarray}
denote the controlled operation $\left| 0 \right> \left< 0 \right| \otimes I + \left| 1 \right> \left< 1 \right| \otimes U$.
Note that these maps interact as follows:
\begin{eqnarray}
& X_i C_i ( U ) X_i =  C_i ( U ) U= U C_i ( U ) \\
& Z_i C_i ( U )  =  C_i ( - U ) = C_i (U ) Z_i
\end{eqnarray}

The next theorem asserts that some of the reflections in
a near-optimal strategy for the magic pentagram game
can be simulated by Pauli operators.  Let
\begin{eqnarray}
\begin{array}{ccc}
X'_1  =  R_6 & \hskip0.5in & X'_4 = S_6 \\
X'_2 = R_5 &  & X'_5 = S_5 \\
X'_3 = R_7 & & X'_6 = S_7 \\
Z'_1 = R_{10} & & Z'_4 = S_{10} \\
Z'_2 = R_9 && Z'_5 = S_9 \\
Z'_3 = R_8 && Z'_6 = S_8,
\end{array}
\end{eqnarray}
where the $R$s are given in Eq.~\eqref{R1toR10}.
These operators are chosen so that for $i\in\{1,2,3\}$ (and similarly for $i\in\{4,5,6\}$) the pairs $(X'_i, Z'_{i} )$, belong to non-adjacent vertices, while all the other pairs of operators  belong to adjacent vertices.  Thus
the approximate commutativity conditions and anti-commutativity
conditions are what one would expect for the corresponding Pauli operators. We note that the particular choice of the $X'$s and $Z'$s here is not unique. The following results will hold for any choice of $X'$s and $Z'$s as long as they satisfy the required approximate commutation relations.

\begin{proposition}
\label{prop:a_approx}
Let $(L, \{ R_v^j \} , \{ S_v \} )$ be a reflection
strategy with expected score $1 - \epsilon$.  Then, there
exists an isometry $\Psi_A$ from $\mathcal{H}_A$
to $\mathcal{H}_A \otimes Q_1 \otimes Q_2 \otimes Q_3$ such that for all $i \in \{ 1, 2 , 3 \}$,
\begin{eqnarray}
\left\| X_i \Psi_A L - \Psi_A X'_i L \right\|_2 & \leq & O ( \sqrt{\epsilon } ) \\
\left\| Z_i \Psi_A L - \Psi_A Z'_i L \right\|_2 & \leq & O ( \sqrt{\epsilon } ).
\end{eqnarray}
\end{proposition}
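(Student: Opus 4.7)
The plan is to construct $\Psi_A$ as a standard SWAP-type isometry in the spirit of McKague et al.~\cite{mckague2012robust}, built from the controlled operations $C_i(X'_i)$ and $C_i(Z'_i)$ on Alice's system together with fresh qubits $Q_1, Q_2, Q_3$ initialized in $\ket{0}$. Specifically, I would define a gadget $V_i$ on $Q_i \otimes \mathcal{H}_A$ of the form
\begin{eqnarray*}
V_i \;=\; C_i(X'_i)\,(H_i \otimes I)\, C_i(Z'_i)\,(H_i \otimes I),
\end{eqnarray*}
and set $\Psi_A = V_1 V_2 V_3$ applied to $\ket{0}_{Q_1}\ket{0}_{Q_2}\ket{0}_{Q_3} \otimes (\cdot)$. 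That $\Psi_A$ is genuinely an isometry follows because each $V_i$ is a product of unitaries, using $(X'_i)^2 = (Z'_i)^2 = I$.

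First I would verify the target identities $X_i \Psi_A L \approx \Psi_A X'_i L$ and $Z_i \Psi_A L \approx \Psi_A Z'_i L$ in the toy ``single-qubit'' case in which only gadget $V_i$ is present. Using the algebraic rules stated just before the proposition, $X_i C_i(U) = C_i(U) U X_i$ and $Z_i C_i(U) = C_i(U) Z_i$, together with $H_i X_i H_i = Z_i$, this reduces to a direct calculation showing that $V_i$ acting on $\ket{0}_{Q_i}$ exactly intertwines the ancilla Pauli with the corresponding primed operator.

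The main obstacle, and the place where the near-optimality hypothesis enters, is handling the interaction between distinct gadgets: to prove $X_i \Psi_A L \approx \Psi_A X'_i L$ one must commute $X'_i$ past every $V_j$ with $j \neq i$, which contains the controls $C_j(X'_j)$ and $C_j(Z'_j)$. This is exactly where Propositions~\ref{prop:comm} and \ref{prop:anti} are invoked. By the chosen labelling, for $j \neq i$ the operator $X'_i$ sits on a vertex adjacent to both $X'_j$ and $Z'_j$, and so approximately commutes with them after application to $L$; thus it can be slid past $C_j(\cdot)$ at a cost of $O(\sqrt{\epsilon})$ per swap. For the same-index control $C_i(Z'_i)$ inside $V_i$, the pair $(X'_i, Z'_i)$ sits on non-adjacent vertices, so they approximately anti-commute; the induced sign $C_i(-Z'_i)$ is absorbed using $Z_i C_i(U) = C_i(-U) Z_i$, so the sign migrates onto the ancilla $Q_i$ where it lands exactly where a real Pauli would.

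Once all the commutations are carried out, the problem collapses to the single-gadget identity handled in the first step. Since the gadgets are constant in number and each pushing step incurs a single $O(\sqrt{\epsilon})$ error from Propositions~\ref{prop:comm}--\ref{prop:anti}, the errors add to $O(\sqrt{\epsilon})$ overall. The argument for $Z_i \Psi_A L \approx \Psi_A Z'_i L$ is identical after interchanging the roles of the Hadamard-conjugated positions, using that $Z'_i$ also commutes with the off-index $X'_j, Z'_j$ and anti-commutes with $X'_i$.
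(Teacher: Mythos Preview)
Your construction is the same SWAP-gadget isometry the paper uses: your $V_i$ applied to $\ket{0}_{Q_i}$ is exactly the paper's $\Psi_i$, and $\Psi_A=\Psi_1\Psi_2\Psi_3$. One correction, though: the single-gadget intertwining is \emph{not} exact for $X_i$. Expanding $V_i(\ket{0}\otimes z)=\tfrac12\big[\ket{0}\otimes(I+Z'_i)z+\ket{1}\otimes X'_i(I-Z'_i)z\big]$, one sees that $X_iV_i(\ket{0}\otimes z)=V_i(\ket{0}\otimes X'_i z)$ holds iff $X'_iZ'_i+Z'_iX'_i=0$. So the near-optimality hypothesis already enters in the single-gadget step via Proposition~\ref{prop:anti}, not only in the cross-gadget interaction as you state. (For $Z_i$ the single-gadget intertwining is indeed exact.) You do invoke the anti-commutation later, so the argument goes through; it is just the framing of ``where $\epsilon$ first appears'' that is off.

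For the cross-gadget step the paper takes a slightly different route from yours. Rather than commuting $X'_i$ directly past the Alice-side controls in $V_j$ using Proposition~\ref{prop:comm} (which works, but requires commuting through a product of Alice operators sitting between $X'_i$ and $L$), the paper introduces a Bob-side analogue $\Phi_j$ of $\Psi_j$ and uses Proposition~\ref{prop:switch} to replace $\Psi_jL$ by $L\Phi_j$. Since $\Phi_j$ acts on $\mathcal{H}_B$, it commutes \emph{exactly} with every Alice-side operator, so the single-gadget estimate can be applied with a bare $L$ on the right and then $\Phi_j$ swapped back. Both approaches yield $O(\sqrt{\epsilon})$; the paper's avoids iterating the approximate commutation relations through a stack of Alice operators, at the cost of introducing one extra auxiliary object.
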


\begin{proof}
Our construction of the isometries follows previous papers on rigidity (e.g., \cite{McKague16}).
For each $i \in \{ 1, 2, 3 \}$ define
\begin{eqnarray}
\Psi_i \colon \mathcal{H}_A \to \mathcal{H}_A \otimes Q_i
\end{eqnarray}
by
\begin{eqnarray}
\label{psiexp}
\Psi_i ( z ) & = & [ C_i ( X'_i ) ] H_i [ C_i ( Z'_i )] (z \otimes \left| + \right> ).
\end{eqnarray}
Then, the following is an $O( \sqrt{\epsilon} )$-approximate sequence:
\begin{eqnarray*}
&& X_i \Psi_i L, \\
&& X_i [ C_i ( X'_i )] H_i [ C_i ( Z'_i ) ] (L \otimes \left| + \right>  ), \\
&& [C_i ( X'_i ) ] X'_i X_i H_i [ C_i ( Z'_i ) ] ( L \otimes \left| + \right> ), \\
&& [C_i ( X'_i ) ] H_i Z_i X'_i [ C_i ( Z'_i ) ] ( L \otimes \left| + \right> ), \\
&& [C_i ( X'_i ) ] H_i Z_i [ C_i ( - Z'_i ) ] X'_i ( L \otimes \left| + \right> ), \\
&& [C_i ( X'_i ) ] H_i [ C_i ( Z'_i ) ] X'_i ( L \otimes \left| + \right> ),  \\
&& \Psi_i X'_i L. 
\end{eqnarray*}
Thus, 
\begin{eqnarray*}
\left\| X_i \Psi_i L - \Psi_i X'_i L \right\|_2 & \leq & O ( \sqrt{\epsilon } ).
\end{eqnarray*}
Additionally, the following is an $O( \sqrt{\epsilon} )$-approximate sequence:
\begin{eqnarray*}
&& Z_i \Psi_i L, \\
&& Z_i [ C_i ( X'_i )] H_i [ C_i ( Z'_i ) ] (L \otimes \left| + \right>  ), \\
&& [ C_i ( X'_i )] Z_i H_i [ C_i ( Z'_i ) ] (L \otimes \left| + \right>  ), \\
&& [ C_i ( X'_i )] H_i X_i [ C_i ( Z'_i ) ] (L \otimes \left| + \right>  ), \\
&& [ C_i ( X'_i )] H_i  [ C_i ( Z'_i ) ] Z'_i X_i (L \otimes \left| + \right>  ), \\
&& [ C_i ( X'_i )] H_i  [ C_i ( Z'_i ) ] Z'_i  (L \otimes \left| + \right>  ), \\
&& \Psi_i Z'_i L.
\end{eqnarray*}
Thus,
\begin{eqnarray*}
\left\| Z_i \Psi_i L - \Psi_i Z'_i L \right\|_2 & \leq & O ( \sqrt{\epsilon } ).
\end{eqnarray*}

Also, if $i, k \in \{ 1, 2,3 \}$  with $k \neq i$, then
by Proposition~\ref{prop:comm}, the following is a $O( \sqrt{\epsilon} )$-approximate sequence:
\begin{eqnarray*}
&& X'_k \Psi_i L, \\
&& X'_k [ C_i ( X'_i )] H_i [ C_i ( Z'_i ) ] (L \otimes \left| + \right>  ), \\
&& X'_k [ C_i ( X'_i )] H_i L [ C_i ( Z'_{i+3} ) ] (I \otimes \left| + \right>  ), \\
&& X'_k [ C_i ( X'_i )] L H_i [ C_i ( Z'_{i+3} ) ] (I \otimes \left| + \right>  ), \\
&&  [ C_i ( X'_i )] X'_k L H_i [ C_i ( Z'_{i+3} ) ] (I \otimes \left| + \right>  ), \\
&&  [ C_i ( X'_i )] X'_k  H_i [ C_i ( Z'_i ) ] L  (I \otimes \left| + \right>  ), \\
&&  [ C_i ( X'_i )]   H_i [ C_i ( Z'_i ) ] X'_k L  (I \otimes \left| + \right>  ), \\
&&  \Psi_i X'_k L .
\end{eqnarray*}
Therefore
\begin{eqnarray}
\left\| X'_k \Psi_i L - \Psi_i X'_k L \right\|_2 & \leq & O ( \sqrt{\epsilon } )
\end{eqnarray}
and by similar reasoning,
\begin{eqnarray}
\left\| Z'_k \Psi_i L - \Psi_i Z'_k L \right\|_2 & \leq & O ( \sqrt{\epsilon } ).
\end{eqnarray}

Define $\Phi_i \colon \mathcal{H}_B  \to \mathcal{H}_B \otimes Q_i$ by the same expression
used to define $\Psi_i$, except with the operators $X'_i, Z'_i$ replaced with $X'_{i+3}, Z'_{i+3}$:
\begin{eqnarray}
\Phi_i ( z ) & = &   [ C_i ( X'_{i+3} ) ] H_i [ C_{i} ( Z'_{i+3} ) ] (z \otimes \left| + \right> ).
\end{eqnarray}
Then, $\left\| \Psi_i L - L \Phi_i \right\|_2 \leq O ( \sqrt{\epsilon} )$ by Proposition~\ref{prop:switch}.  
Let
\begin{eqnarray}
\Psi_A & = & \Psi_1 \Psi_2 \Psi_3.
\end{eqnarray}
Then, the following is an $O ( \sqrt{\epsilon} )$-approximate sequence:
\begin{eqnarray*}
&& X_2 \Psi_A L, \\
&& X_2 \Psi_1 \Psi_2 \Psi_3 L, \\
&& \Psi_1 X_2 \Psi_2 \Psi_3 L, \\
&& \Psi_1 X_2 \Psi_2 L \Phi_3,  \\
&& \Psi_1 \Psi_2 X'_2 L \Phi_3,  \\
&& \Psi_1 \Psi_2 X'_2  \Psi_3  L, \\
&& \Psi_1 \Psi_2 \Psi_3 X'_2 L.
\end{eqnarray*}
Therefore,
\begin{eqnarray}
\left\| X_2 \Psi_A L - \Psi_A X'_2 L \right\|_2 & \leq & O (\sqrt{\epsilon} ).
\end{eqnarray}
The desired result for $i=1, 3$ follows by similar reasoning.
\end{proof}

Likewise, we have the following.
\begin{proposition}
\label{prop:b_approx}
Let $(L, \{ R_v^j \} , \{ S_v \} )$ be a reflection
strategy with expected score $1 - \epsilon$.  Then, there
exists an isometry $\Psi_B$ from $\mathcal{H}_B$
to $\mathcal{H}_B \otimes Q_4\otimes Q_5\otimes Q_6$
such that for all $i \in \{ 4, 5, 6 \}$,
\begin{eqnarray}
\left\| L \Psi_B^\dagger X_i - L X'_i \Psi^\dagger_B \right\|_2 & \leq & O ( \sqrt{\epsilon } ) \\
\left\| L \Psi_B^\dagger Z_i - L Z'_i \Psi^\dagger_B \right\|_2 & \leq & O ( \sqrt{\epsilon } ).
\end{eqnarray}
\end{proposition}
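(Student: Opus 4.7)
The plan is to mirror the construction from Proposition~\ref{prop:a_approx} on Bob's side. For each $i \in \{4,5,6\}$ define $\Phi_i\colon \mathcal{H}_B \to \mathcal{H}_B \otimes Q_i$ by
\begin{equation*}
\Phi_i(z) = [C_i(X'_i)]\,H_i\,[C_i(Z'_i)]\,(z\otimes\ket{+}),
\end{equation*}
and set $\Psi_B := \Phi_4\Phi_5\Phi_6$. Each $\Phi_i$ is manifestly an isometry since it is a composition of unitaries applied to $z\otimes\ket{+}$.

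It is cleanest to pass to adjoints: writing $M := L^\dagger$, the target inequality $\|L\Psi_B^\dagger X_i - LX'_i\Psi_B^\dagger\|_2 \leq O(\sqrt\epsilon)$ is equivalent to $\|X_i\Psi_B M - \Psi_B X'_i M\|_2 \leq O(\sqrt\epsilon)$, and similarly for the $Z$-relation. Taking adjoints of Propositions~\ref{prop:switch}, \ref{prop:comm}, and \ref{prop:anti} produces
\begin{align*}
\|S_v M - M R_v^j\|_2 &\leq O(\sqrt\epsilon), \\
\|S_w S_v M - S_v S_w M\|_2 &\leq O(\sqrt\epsilon) \quad (\text{adjacent } v,w),\\
\|S_w S_v M + S_v S_w M\|_2 &\leq O(\sqrt\epsilon) \quad (\text{non-adjacent } v,w).
\end{align*}
These are structurally identical to the relations used in the proof of Proposition~\ref{prop:a_approx}, with $(R,L)$ replaced by $(S,M)$ and the action taking place on the left of $M$. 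Consequently the two seven-step approximate sequences from that proof carry over verbatim to give $\|X_i\Phi_i M - \Phi_i X'_i M\|_2 \leq O(\sqrt\epsilon)$ and the $Z$-analog for each $i\in\{4,5,6\}$, together with the side bound $\|X'_k\Phi_i M - \Phi_i X'_k M\|_2 \leq O(\sqrt\epsilon)$ (and its $Z'_k$ counterpart) for $k\in\{4,5,6\}\setminus\{i\}$, since by the labelling displayed just before Proposition~\ref{prop:a_approx} those cross pairs correspond to adjacent vertices on Bob's side.

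For the final concatenation I need a Bob-to-Alice swap. Letting $\Psi'_i$ denote the Alice-side analog of $\Phi_i$ built from $X'_{i-3},Z'_{i-3}$, the bound $\|\Phi_i M - M\Psi'_i\|_2 \leq O(\sqrt\epsilon)$ follows by expanding the controlled operators and applying the adjoint form of Proposition~\ref{prop:switch} coordinatewise (e.g.\ $X'_4 M = S_6 M \approx M R_6 = M X'_1$). The three-way gluing argument at the end of the proof of Proposition~\ref{prop:a_approx} then assembles these pieces into $\|X_i\Psi_B M - \Psi_B X'_i M\|_2\leq O(\sqrt\epsilon)$, and one further adjoint yields the statement as written. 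The main obstacle is the adjoint bookkeeping: a product $T_1\cdots T_n L$ becomes $L^\dagger T_n^\dagger\cdots T_1^\dagger$, so every intermediate approximate sequence runs in reverse, and the sign attached to anti-commutation errors must be tracked through this reversal (it is preserved because each reflection is Hermitian).
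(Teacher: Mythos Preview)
Your proposal is correct and follows the same approach as the paper: define the Bob-side isometries $\Psi_i$ for $i\in\{4,5,6\}$ by the formula~(\ref{psiexp}) with the operators $X'_i,Z'_i$ (which on this side are the $S_v$), set $\Psi_B=\Psi_4\Psi_5\Psi_6$, and rerun the argument from Proposition~\ref{prop:a_approx}. The paper's own proof is a two-line appeal to ``the same reasoning''; your explicit reduction via $M=L^\dagger$ and the adjoint forms of Propositions~\ref{prop:switch}--\ref{prop:anti} is exactly how that reasoning is made precise, so there is no substantive difference.
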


\begin{proof}
Define $\Psi_i$ for $i \in \{ 4,5,6 \}$ by the same expression (\ref{psiexp}) that was used
in the previous proof, and let $\Psi_B = \Psi_4 \Psi_5 \Psi_6$.  The desired
result follows by the same reasoning that was used to prove  Proposition~\ref{prop:a_approx}.
\end{proof}

Note that Propositions~\ref{prop:a_approx} and \ref{prop:b_approx} easily generalize to
sequences of measurements --- for example, the following is an $O(\sqrt{\epsilon})$-approximate sequence:
\begin{eqnarray}
X_1 X_2 \Psi_A L, \\
X_1 \Psi_A X'_2 L, \\
X_1 \Psi_A L X'_5, \\
\Psi_A X'_1 L X'_5, \\
\Psi_A X'_1 X'_2 L.
\end{eqnarray}
Applying this method inductively, we have the following corollary.
\begin{corollary}
\label{cor:approx}
The isometries from Proposition~\ref{prop:a_approx} and \ref{prop:b_approx} satisfy the following.
For any sequence $M'_1, \ldots, M'_n \in \{ X'_1, X'_2, X'_3, Z'_1, Z'_2, Z'_3 \}$ and 
corresponding sequence $M_1, \ldots, M_n \in \{ X_1, X_2, X_3, Z_1, Z_2, Z_3 \}$,
\begin{eqnarray*}
\left\| M_1 \cdots M_n \Psi_A L - \Psi_A M'_1 \cdots M'_n  L \right\|_2 & \leq & O ( n \sqrt{\epsilon } ).
\end{eqnarray*}
For any sequence $N'_1, \ldots, N'_n \in \{ X'_4, X'_5, X'_6, Z'_4, Z'_5, Z'_6 \}$ and corresponding
sequence $N_1, \ldots, N_n \in \{ X_4, X_5, X_6, Z_4, Z_5, Z_6 \}$,
\begin{eqnarray*}
\left\| L \Psi_B^\dagger N_n \cdots N_1 - L  N'_n \cdots N'_1 \Psi_B^\dagger \right\|_2 & \leq & O ( n \sqrt{\epsilon } ). \qed
\end{eqnarray*}
\end{corollary}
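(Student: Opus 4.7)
The plan is to prove the corollary by induction on $n$, treating each added operator as a single "hop" that contributes one more $O(\sqrt{\epsilon})$ error term. The two-operator example shown just before the corollary is already the inductive step in disguise; I would formalize it as follows.

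Let $\tilde{M}'_j$ denote the ``Bob partner'' of an Alice operator $M'_j$, determined by the pairing $X'_i \leftrightarrow X'_{i+3}$ and $Z'_i \leftrightarrow Z'_{i+3}$ for $i \in \{1,2,3\}$. Each $M'_j$ equals some $R_v^{j_v}$ whose Bob counterpart is the corresponding $S_v$, so Proposition~\ref{prop:switch} gives $\|M'_j L - L \tilde{M}'_j\|_2 \le O(\sqrt{\epsilon})$. This is the ``transport'' move that lets us shuttle operators through $L$.

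The base case $n = 1$ is Proposition~\ref{prop:a_approx}. For the inductive step, assume the bound holds for sequences of length $n-1$. Starting from $M_1 \cdots M_n \Psi_A L$, I would apply Proposition~\ref{prop:a_approx} once on the innermost factor to replace $M_n \Psi_A L$ by $\Psi_A M'_n L$, then use Proposition~\ref{prop:switch} to rewrite $\Psi_A M'_n L$ as $\Psi_A L \tilde{M}'_n$. Now $\tilde{M}'_n$ sits on the right of $L$ and acts on $\mathcal{H}_B$, so it does not interfere with the inductive hypothesis applied to $M_1 \cdots M_{n-1} \Psi_A L$. Each approximate step is preserved under right multiplication by the reflection $\tilde{M}'_n$ (which has operator norm $1$). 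Finally, transport $\tilde{M}'_n$ back across $L$ via Proposition~\ref{prop:switch} to obtain $\Psi_A M'_1 \cdots M'_{n-1} M'_n L$. Each inductive stage incurs a constant number of $O(\sqrt{\epsilon})$ errors, giving a total of $O(n \sqrt{\epsilon})$.

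The Bob statement follows by the mirror argument: use Proposition~\ref{prop:b_approx} as the base case and transport each $N_j$ across $L$ from the right using Proposition~\ref{prop:switch}, now defining the Alice partner of each Bob operator by the same pairing.

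The only subtle point, and the thing I would check carefully, is that the transport step through $L$ does not inflate the error by a factor that depends on $n$. This is fine because each $\tilde{M}'_j$ (and each intermediate reflection) has operator norm $1$, so left or right multiplication by these operators is isometric on the Hilbert–Schmidt norm. Consequently, the errors from the $O(n)$ distinct applications of Propositions~\ref{prop:switch}, \ref{prop:a_approx}, and \ref{prop:b_approx} simply add, yielding the claimed $O(n\sqrt{\epsilon})$ bound without any multiplicative blow-up.
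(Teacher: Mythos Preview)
Your proposal is correct and follows exactly the method the paper indicates: the paper gives only the two-operator worked example and the phrase ``Applying this method inductively,'' and your write-up is precisely that induction made explicit, including the transport-to-Bob step via Proposition~\ref{prop:switch} and the observation that all intermediate factors have operator norm~$1$ so the errors merely add.
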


Finally, we prove the following proposition, which addresses the image of
the $L$ under the isometry $\Psi_A \otimes \Psi_B$.  For each $i \in \{ 1, 2, 3 \}$,
let
\begin{eqnarray}
\phi_i^+ \colon Q_i \to Q_{i+3}
\end{eqnarray}
be defined by
\begin{eqnarray}
\phi_i^+ & = & \left[ \begin{array}{cc} \frac{1}{\sqrt{2}} & 0 \\ 0 & \frac{1}{\sqrt{2}} \end{array} \right].
\end{eqnarray}
(This is a matrix expression for an EPR pair.)  Let
\begin{eqnarray}
\phi_i^- & = & \left[ \begin{array}{cc} \frac{1}{\sqrt{2}} & 0 \\ 0 & - \frac{1}{\sqrt{2}} \end{array} \right] \\
\psi_i^+ & = & \left[ \begin{array}{cc} 0  & \frac{1}{\sqrt{2}} \\ \frac{1}{\sqrt{2}} & 0  \end{array} \right] \\
\psi_i^- & = & \left[ \begin{array}{cc} 0  & \frac{1}{\sqrt{2}} \\ - \frac{1}{\sqrt{2}} & 0  \end{array} \right].
\end{eqnarray}

\begin{proposition}
\label{prop:state_approx}
Let $L, \Psi_A, \Psi_B$
be the operators from Propositions~\ref{prop:a_approx} and
\ref{prop:b_approx}.  Then, for some $L' \colon \mathcal{H}_B \to \mathcal{H}_A$,
\begin{eqnarray}
\left\| L' \otimes \phi_1^+ \otimes \phi_2^+ \otimes \phi_3^+ - \Psi_A L \Psi^\dagger_B \right\|_2
& \leq & O ( \sqrt{ \epsilon } ).
\end{eqnarray}
\end{proposition}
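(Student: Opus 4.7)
The plan is to set $T := \Psi_A L \Psi_B^\dagger$, establish approximate stabilizer relations $X_i T X_{i+3} \approx T$ and $Z_i T Z_{i+3} \approx T$ for each $i \in \{1,2,3\}$, and then expand $T$ in the Bell-map basis on each qubit pair $(Q_i, Q_{i+3})$. The stabilizer relations will force $T$ to be approximately supported on the $\phi^+ \otimes \phi^+ \otimes \phi^+$ sector, and $L'$ can then be read off as the coefficient of that component.

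To get the $X$-stabilizer relation, I would chain three approximations. Proposition~\ref{prop:a_approx} gives $X_i \Psi_A L \approx \Psi_A X'_i L$. Since for each $i \in \{1,2,3\}$ the operators $X'_i$ and $X'_{i+3}$ are Alice's and Bob's reflections at the same pentagram vertex (e.g.\ $X'_1 = R_6$ and $X'_4 = S_6$ both correspond to vertex $6$), Proposition~\ref{prop:switch} gives $X'_i L \approx L X'_{i+3}$. Proposition~\ref{prop:b_approx} then gives $L X'_{i+3} \Psi_B^\dagger \approx L \Psi_B^\dagger X_{i+3}$. Concatenating and right-multiplying by the involution $X_{i+3}$ yields $\|X_i T X_{i+3} - T\|_2 \le O(\sqrt{\epsilon})$. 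The identical argument with $Z'_i, Z'_{i+3}$ (which also pair up on shared vertices: $Z'_1=R_{10}$ and $Z'_4=S_{10}$, etc.) gives $\|Z_i T Z_{i+3} - T\|_2 \le O(\sqrt{\epsilon})$.

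Next I would expand
\[
T \;=\; \sum_{\alpha_1,\alpha_2,\alpha_3 \in \{\phi^+,\phi^-,\psi^+,\psi^-\}} T_{\alpha_1 \alpha_2 \alpha_3} \otimes B_{\alpha_1} \otimes B_{\alpha_2} \otimes B_{\alpha_3},
\]
where each $B_{\alpha_i}$ is the Bell map on $(Q_i,Q_{i+3})$ of type $\alpha_i$. These four maps are orthonormal in the Hilbert--Schmidt inner product and are simultaneous eigenvectors of $B \mapsto XBX$ and $B \mapsto ZBZ$, with $\phi^+$ the unique one having eigenvalue $+1$ under both. Orthonormality gives
\[
\|X_i T X_{i+3} - T\|_2^2 \;=\; 4\!\sum_{\alpha : \alpha_i \in \{\phi^-,\psi^-\}} \|T_\alpha\|_2^2 \;\le\; O(\epsilon),
\]
and likewise the $Z$-relation yields $\sum_{\alpha : \alpha_i \in \{\psi^+,\psi^-\}} \|T_\alpha\|_2^2 \le O(\epsilon)$. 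A union bound over $i \in \{1,2,3\}$ and over the two constraints then gives $\sum_{\alpha \neq (\phi^+,\phi^+,\phi^+)} \|T_\alpha\|_2^2 \le O(\epsilon)$. Setting $L' := T_{(\phi^+,\phi^+,\phi^+)}$ and applying Pythagoras in the orthogonal Bell-basis decomposition yields the desired $\|T - L' \otimes \phi_1^+ \otimes \phi_2^+ \otimes \phi_3^+\|_2 \le O(\sqrt{\epsilon})$.

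The main bookkeeping obstacle is the first step: keeping straight the direction in which each approximate equality acts (left of $L$ versus right of $L$) and verifying that for every $i$ the pairs $(X'_i, X'_{i+3})$ and $(Z'_i, Z'_{i+3})$ really do correspond to single pentagram vertices, so that Proposition~\ref{prop:switch} applies. After this, the Bell-basis argument is mechanical, requiring only that each Bell map is an eigenvector of the relevant pair of conjugation superoperators with the eigenvalue pattern described above.
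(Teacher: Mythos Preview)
Your proposal is correct and follows essentially the same route as the paper's proof: both establish the approximate stabilizer relations $X_i T X_{i+3} \approx T$ and $Z_i T Z_{i+3} \approx T$ via Propositions~\ref{prop:switch}, \ref{prop:a_approx}, and \ref{prop:b_approx}, then expand $T$ in the Bell-map basis and use the eigenvalue pattern of the Bell maps under $X$- and $Z$-conjugation to conclude that only the $\phi^+\otimes\phi^+\otimes\phi^+$ component survives up to $O(\sqrt{\epsilon})$. Your bookkeeping (the explicit union bound over the six constraints and the Pythagoras step) is slightly more spelled out than the paper's, but the argument is the same.
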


\begin{proof}
Let $P = \Psi_A L \Psi^\dagger_B$.  By the score assumption,
\begin{eqnarray}
\left\| X'_i L X'_{i+3} - L \right\|_2 & \leq & O ( \sqrt{\epsilon } ) \\
\left\| Z'_i L Z'_{i+3} - L \right\|_2 & \leq & O ( \sqrt{\epsilon } ),
\end{eqnarray}
for $i \in \{ 1, 2, 3 \}$, therefore by Propositions~\ref{prop:a_approx} and \ref{prop:b_approx},
\begin{eqnarray}
\label{oneapprox}
\left\| X_i P X_{i+3} - P \right\|_2 & \leq & O ( \sqrt{\epsilon } ) \\
\label{twoapprox}
\left\| Z_i P Z_{i+3} - P \right\|_2 & \leq & O ( \sqrt{\epsilon } ),
\end{eqnarray}
Note that $X_i \phi_i^+ X_i = Z_i \phi_i^+ Z_i = \phi_i^+$, while
the other Bell states fail significantly to satisfy the same equalities:
\begin{eqnarray}
X_i \phi_i^- X_i & = & - \phi_i^- \\
Z_i \psi_i^+ Z_i & = & - \psi_i^+ \\
Z_i \psi_i^- Z_i & = & - \psi_i^-.
\end{eqnarray}
Write
\begin{eqnarray}
P & = & \sum_{v_1, v_2, v_3} v_1 \otimes v_2 \otimes v_3 \otimes P_{v_1, v_2, v_3},
\end{eqnarray}
where $v_i$ varies over $\{ \phi_i^+, \phi_i^-, \psi_i^+, \psi_i^- \}$.   Conditions (\ref{oneapprox}) 
and (\ref{twoapprox}) imply that all components $P_{v_1, v_2, v_3}$ except
$P_{\phi_1^+, \phi_2^+, \phi_3^+}$ must have Euclidean norm less than $O ( \sqrt{\epsilon} )$.  The desired result
follows.
\end{proof}

\vskip0.2in 

\section{Summary and Conclusions}\label{sec:conclusion}
Quantum rigidity allows a classical user to certify manipulations of quantum systems, thus enabling
quantum cryptography in a scenario in which the user does not trust her quantum apparatus (device-independent
quantum cryptography).  In this paper we have expanded the
toolbox for the device-independent setting by showing that the magic pentagram game is rigid.
In particular, this means that it is possible to certify the existence of $3$ ebits using a game that consists of only $20$ questions.

In our style of proof we have reduced some of the arguments for rigidity to bare manipulations of sequences
of operators (see the proofs in section~\ref{sec:rigidity}).  This style in particular allows us to cleanly handle conditions such as
approximate commutativity and anti-commutativity.  Such an
approach could be useful for proving more general results.  

A natural next step would be to try to parallelize our result (following \cite{reichardt2013classical, Ostrev:2015, Ostrev:2016, coladangelo2016parallel, natarajan2016robust, McKague16, mckague2016selfCHSH, chao2016test, coudron2016parallel}) to show that parallel copies of the magic
pentagram game can be used to certify a maximally entangled state of arbitrary size.  Then, we could
try to choose a small subset of the questions from the parallelized game and prove that that subset is adequate to
achieve rigidity.

The magic pentagram game is an example of a binary constraint satisfaction XOR game in which every variable appears
in exactly two contexts.  This class of games was studied in~\cite{Arkhipov12}, and the author proved that any
game in the class that exhibits pseudo-telepathy must in a sense contain either the magic square game
or the magic pentagram game (as topological minors of its relational graph). An interesting further direction would be to explore further the consequences for our rigidity
result (and \cite{Wu16}) for the class from \cite{Arkhipov12}.

\acknowledgements
The authors would like to thank  Cedric Lin for bringing Ref.~\cite{Arkhipov12} to our attention, and Matthew McKague for helpful technical discussions about our proofs. AK is funded by the US Department of Defense.

\bibliographystyle{apsrev4-1}
\bibliography{pentagram}

\begin{thebibliography}{26}%
\makeatletter
\providecommand \@ifxundefined [1]{%
 \@ifx{#1\undefined}
}%
\providecommand \@ifnum [1]{%
 \ifnum #1\expandafter \@firstoftwo
 \else \expandafter \@secondoftwo
 \fi
}%
\providecommand \@ifx [1]{%
 \ifx #1\expandafter \@firstoftwo
 \else \expandafter \@secondoftwo
 \fi
}%
\providecommand \natexlab [1]{#1}%
\providecommand \enquote  [1]{``#1''}%
\providecommand \bibnamefont  [1]{#1}%
\providecommand \bibfnamefont [1]{#1}%
\providecommand \citenamefont [1]{#1}%
\providecommand \href@noop [0]{\@secondoftwo}%
\providecommand \href [0]{\begingroup \@sanitize@url \@href}%
\providecommand \@href[1]{\@@startlink{#1}\@@href}%
\providecommand \@@href[1]{\endgroup#1\@@endlink}%
\providecommand \@sanitize@url [0]{\catcode `\\12\catcode `\$12\catcode
  `\&12\catcode `\#12\catcode `\^12\catcode `\_12\catcode `\%12\relax}%
\providecommand \@@startlink[1]{}%
\providecommand \@@endlink[0]{}%
\providecommand \url  [0]{\begingroup\@sanitize@url \@url }%
\providecommand \@url [1]{\endgroup\@href {#1}{\urlprefix }}%
\providecommand \urlprefix  [0]{URL }%
\providecommand \Eprint [0]{\href }%
\providecommand \doibase [0]{http://dx.doi.org/}%
\providecommand \selectlanguage [0]{\@gobble}%
\providecommand \bibinfo  [0]{\@secondoftwo}%
\providecommand \bibfield  [0]{\@secondoftwo}%
\providecommand \translation [1]{[#1]}%
\providecommand \BibitemOpen [0]{}%
\providecommand \bibitemStop [0]{}%
\providecommand \bibitemNoStop [0]{.\EOS\space}%
\providecommand \EOS [0]{\spacefactor3000\relax}%
\providecommand \BibitemShut  [1]{\csname bibitem#1\endcsname}%
\let\auto@bib@innerbib\@empty
\bibitem [{\citenamefont {Mayers}\ and\ \citenamefont
  {Yao}(1998)}]{mayers1998quantum}%
  \BibitemOpen
  \bibfield  {author} {\bibinfo {author} {\bibfnamefont {D.}~\bibnamefont
  {Mayers}}\ and\ \bibinfo {author} {\bibfnamefont {A.}~\bibnamefont {Yao}},\
  }in\ \href {\doibase 10.1109/SFCS.1998.743501} {\emph {\bibinfo {booktitle}
  {Foundations of Computer Science, 1998. Proceedings. 39th Annual Symposium
  on}}}\ (\bibinfo {organization} {IEEE},\ \bibinfo {year} {1998})\ pp.\
  \bibinfo {pages} {503--509}\BibitemShut {NoStop}%
\bibitem [{\citenamefont {Popescu}\ and\ \citenamefont
  {Rohrlich}(1992)}]{popescu1992states}%
  \BibitemOpen
  \bibfield  {author} {\bibinfo {author} {\bibfnamefont {S.}~\bibnamefont
  {Popescu}}\ and\ \bibinfo {author} {\bibfnamefont {D.}~\bibnamefont
  {Rohrlich}},\ }\href@noop {} {\bibfield  {journal} {\bibinfo  {journal}
  {Physics Letters A}\ }\textbf {\bibinfo {volume} {169}},\ \bibinfo {pages}
  {411} (\bibinfo {year} {1992})}\BibitemShut {NoStop}%
\bibitem [{\citenamefont {Summers}\ and\ \citenamefont
  {Werner}(1987)}]{Summers1987}%
  \BibitemOpen
  \bibfield  {author} {\bibinfo {author} {\bibfnamefont {S.~J.}\ \bibnamefont
  {Summers}}\ and\ \bibinfo {author} {\bibfnamefont {R.}~\bibnamefont
  {Werner}},\ }\href@noop {} {\bibfield  {journal} {\bibinfo  {journal}
  {Journal of Mathematical Physics}\ }\textbf {\bibinfo {volume} {28}},\
  \bibinfo {pages} {2440} (\bibinfo {year} {1987})}\BibitemShut {NoStop}%
\bibitem [{\citenamefont {Reichardt}\ \emph {et~al.}(2013)\citenamefont
  {Reichardt}, \citenamefont {Unger},\ and\ \citenamefont
  {Vazirani}}]{reichardt2013classical}%
  \BibitemOpen
  \bibfield  {author} {\bibinfo {author} {\bibfnamefont {B.~W.}\ \bibnamefont
  {Reichardt}}, \bibinfo {author} {\bibfnamefont {F.}~\bibnamefont {Unger}}, \
  and\ \bibinfo {author} {\bibfnamefont {U.}~\bibnamefont {Vazirani}},\
  }\href@noop {} {\bibfield  {journal} {\bibinfo  {journal} {Nature}\ }\textbf
  {\bibinfo {volume} {496}},\ \bibinfo {pages} {456} (\bibinfo {year}
  {2013})}\BibitemShut {NoStop}%
\bibitem [{\citenamefont {McKague}\ \emph {et~al.}(2012)\citenamefont
  {McKague}, \citenamefont {Yang},\ and\ \citenamefont
  {Scarani}}]{mckague2012robust}%
  \BibitemOpen
  \bibfield  {author} {\bibinfo {author} {\bibfnamefont {M.}~\bibnamefont
  {McKague}}, \bibinfo {author} {\bibfnamefont {T.~H.}\ \bibnamefont {Yang}}, \
  and\ \bibinfo {author} {\bibfnamefont {V.}~\bibnamefont {Scarani}},\
  }\href@noop {} {\bibfield  {journal} {\bibinfo  {journal} {Journal of Physics
  A: Mathematical and Theoretical}\ }\textbf {\bibinfo {volume} {45}},\
  \bibinfo {pages} {455304} (\bibinfo {year} {2012})}\BibitemShut {NoStop}%
\bibitem [{\citenamefont {Wu}\ \emph {et~al.}(2016)\citenamefont {Wu},
  \citenamefont {Bancal}, \citenamefont {McKague},\ and\ \citenamefont
  {Scarani}}]{Wu16}%
  \BibitemOpen
  \bibfield  {author} {\bibinfo {author} {\bibfnamefont {X.}~\bibnamefont
  {Wu}}, \bibinfo {author} {\bibfnamefont {J.-D.}\ \bibnamefont {Bancal}},
  \bibinfo {author} {\bibfnamefont {M.}~\bibnamefont {McKague}}, \ and\
  \bibinfo {author} {\bibfnamefont {V.}~\bibnamefont {Scarani}},\ }\href@noop
  {} {\bibfield  {journal} {\bibinfo  {journal} {Physical Review A}\ }\textbf
  {\bibinfo {volume} {93}},\ \bibinfo {pages} {062121} (\bibinfo {year}
  {2016})}\BibitemShut {NoStop}%
\bibitem [{\citenamefont {{\v S}upi{\'c}}\ \emph {et~al.}(2016)\citenamefont
  {{\v S}upi{\'c}}, \citenamefont {Augusiak}, \citenamefont {Salavrakos},\ and\
  \citenamefont {Ac{\'\i}n}}]{Supic2016}%
  \BibitemOpen
  \bibfield  {author} {\bibinfo {author} {\bibfnamefont {I.}~\bibnamefont {{\v
  S}upi{\'c}}}, \bibinfo {author} {\bibfnamefont {R.}~\bibnamefont {Augusiak}},
  \bibinfo {author} {\bibfnamefont {A.}~\bibnamefont {Salavrakos}}, \ and\
  \bibinfo {author} {\bibfnamefont {A.}~\bibnamefont {Ac{\'\i}n}},\ }\href
  {http://stacks.iop.org/1367-2630/18/i=3/a=035013} {\bibfield  {journal}
  {\bibinfo  {journal} {New Journal of Physics}\ }\textbf {\bibinfo {volume}
  {18}},\ \bibinfo {pages} {035013} (\bibinfo {year} {2016})}\BibitemShut
  {NoStop}%
\bibitem [{\citenamefont {Magniez}\ \emph {et~al.}(2006)\citenamefont
  {Magniez}, \citenamefont {Mayers}, \citenamefont {Mosca},\ and\ \citenamefont
  {Ollivier}}]{Magniez:2006}%
  \BibitemOpen
  \bibfield  {author} {\bibinfo {author} {\bibfnamefont {F.}~\bibnamefont
  {Magniez}}, \bibinfo {author} {\bibfnamefont {D.}~\bibnamefont {Mayers}},
  \bibinfo {author} {\bibfnamefont {M.}~\bibnamefont {Mosca}}, \ and\ \bibinfo
  {author} {\bibfnamefont {H.}~\bibnamefont {Ollivier}},\ }\enquote {\bibinfo
  {title} {Self-testing of quantum circuits},}\ in\ \href {\doibase
  10.1007/11786986_8} {\emph {\bibinfo {booktitle} {Automata, Languages and
  Programming: 33rd International Colloquium, ICALP 2006, Venice, Italy, July
  10-14, 2006, Proceedings, Part I}}},\ \bibinfo {editor} {edited by\ \bibinfo
  {editor} {\bibfnamefont {M.}~\bibnamefont {Bugliesi}}, \bibinfo {editor}
  {\bibfnamefont {B.}~\bibnamefont {Preneel}}, \bibinfo {editor} {\bibfnamefont
  {V.}~\bibnamefont {Sassone}}, \ and\ \bibinfo {editor} {\bibfnamefont
  {I.}~\bibnamefont {Wegener}}}\ (\bibinfo  {publisher} {Springer Berlin
  Heidelberg},\ \bibinfo {address} {Berlin, Heidelberg},\ \bibinfo {year}
  {2006})\ pp.\ \bibinfo {pages} {72--83}\BibitemShut {NoStop}%
\bibitem [{\citenamefont {Rabelo}\ \emph {et~al.}(2012)\citenamefont {Rabelo},
  \citenamefont {Zhi},\ and\ \citenamefont {Scarani}}]{Rabelo2012}%
  \BibitemOpen
  \bibfield  {author} {\bibinfo {author} {\bibfnamefont {R.}~\bibnamefont
  {Rabelo}}, \bibinfo {author} {\bibfnamefont {L.~Y.}\ \bibnamefont {Zhi}}, \
  and\ \bibinfo {author} {\bibfnamefont {V.}~\bibnamefont {Scarani}},\ }\href
  {\doibase 10.1103/PhysRevLett.109.180401} {\bibfield  {journal} {\bibinfo
  {journal} {Phys. Rev. Lett.}\ }\textbf {\bibinfo {volume} {109}},\ \bibinfo
  {pages} {180401} (\bibinfo {year} {2012})}\BibitemShut {NoStop}%
\bibitem [{\citenamefont {Man{\v{c}}inska}(2014)}]{mancinska2015maximally}%
  \BibitemOpen
  \bibfield  {author} {\bibinfo {author} {\bibfnamefont {L.}~\bibnamefont
  {Man{\v{c}}inska}},\ }\enquote {\bibinfo {title} {Maximally entangled states
  in pseudo-telepathy games},}\ in\ \href {\doibase
  10.1007/978-3-319-13350-8_15} {\emph {\bibinfo {booktitle} {Computing with
  New Resources: Essays Dedicated to Jozef Gruska on the Occasion of His 80th
  Birthday}}},\ \bibinfo {editor} {edited by\ \bibinfo {editor} {\bibfnamefont
  {C.~S.}\ \bibnamefont {Calude}}, \bibinfo {editor} {\bibfnamefont
  {R.}~\bibnamefont {Freivalds}}, \ and\ \bibinfo {editor} {\bibfnamefont
  {I.}~\bibnamefont {Kazuo}}}\ (\bibinfo  {publisher} {Springer International
  Publishing},\ \bibinfo {year} {2014})\ pp.\ \bibinfo {pages}
  {200--207}\BibitemShut {NoStop}%
\bibitem [{\citenamefont {Miller}\ and\ \citenamefont
  {Shi}(2013)}]{miller2013optimal}%
  \BibitemOpen
  \bibfield  {author} {\bibinfo {author} {\bibfnamefont {C.~A.}\ \bibnamefont
  {Miller}}\ and\ \bibinfo {author} {\bibfnamefont {Y.}~\bibnamefont {Shi}},\
  }in\ \href {\doibase http://dx.doi.org/10.4230/LIPIcs.TQC.2013.254} {\emph
  {\bibinfo {booktitle} {8th Conference on the Theory of Quantum Computation,
  Communication and Cryptography (TQC 2013)}}},\ \bibinfo {series} {Leibniz
  International Proceedings in Informatics (LIPIcs)}, Vol.~\bibinfo {volume}
  {22},\ \bibinfo {editor} {edited by\ \bibinfo {editor} {\bibfnamefont
  {S.}~\bibnamefont {Severini}}\ and\ \bibinfo {editor} {\bibfnamefont
  {F.}~\bibnamefont {Brandao}}}\ (\bibinfo  {publisher} {Schloss
  Dagstuhl--Leibniz-Zentrum fuer Informatik},\ \bibinfo {address} {Dagstuhl,
  Germany},\ \bibinfo {year} {2013})\ pp.\ \bibinfo {pages}
  {254--262}\BibitemShut {NoStop}%
\bibitem [{\citenamefont {Wang}\ \emph {et~al.}(2016)\citenamefont {Wang},
  \citenamefont {Wu},\ and\ \citenamefont {Scarani}}]{Wang2016}%
  \BibitemOpen
  \bibfield  {author} {\bibinfo {author} {\bibfnamefont {Y.}~\bibnamefont
  {Wang}}, \bibinfo {author} {\bibfnamefont {X.}~\bibnamefont {Wu}}, \ and\
  \bibinfo {author} {\bibfnamefont {V.}~\bibnamefont {Scarani}},\ }\href
  {http://stacks.iop.org/1367-2630/18/i=2/a=025021} {\bibfield  {journal}
  {\bibinfo  {journal} {New Journal of Physics}\ }\textbf {\bibinfo {volume}
  {18}},\ \bibinfo {pages} {025021} (\bibinfo {year} {2016})}\BibitemShut
  {NoStop}%
\bibitem [{\citenamefont {Bamps}\ and\ \citenamefont
  {Pironio}(2015)}]{Bamps2015}%
  \BibitemOpen
  \bibfield  {author} {\bibinfo {author} {\bibfnamefont {C.}~\bibnamefont
  {Bamps}}\ and\ \bibinfo {author} {\bibfnamefont {S.}~\bibnamefont
  {Pironio}},\ }\href {\doibase 10.1103/PhysRevA.91.052111} {\bibfield
  {journal} {\bibinfo  {journal} {Phys. Rev. A}\ }\textbf {\bibinfo {volume}
  {91}},\ \bibinfo {pages} {052111} (\bibinfo {year} {2015})}\BibitemShut
  {NoStop}%
\bibitem [{\citenamefont {Mermin}(1990)}]{Mermin90}%
  \BibitemOpen
  \bibfield  {author} {\bibinfo {author} {\bibfnamefont {N.~D.}\ \bibnamefont
  {Mermin}},\ }\href@noop {} {\bibfield  {journal} {\bibinfo  {journal}
  {Physical Review Letters}\ }\textbf {\bibinfo {volume} {65}},\ \bibinfo
  {pages} {3373} (\bibinfo {year} {1990})}\BibitemShut {NoStop}%
\bibitem [{\citenamefont {Jain}\ \emph {et~al.}(2017)\citenamefont {Jain},
  \citenamefont {Miller},\ and\ \citenamefont {Shi}}]{Jain:2017}%
  \BibitemOpen
  \bibfield  {author} {\bibinfo {author} {\bibfnamefont {R.}~\bibnamefont
  {Jain}}, \bibinfo {author} {\bibfnamefont {C.~A.}\ \bibnamefont {Miller}}, \
  and\ \bibinfo {author} {\bibfnamefont {Y.}~\bibnamefont {Shi}},\ }\href@noop
  {} {\bibfield  {journal} {\bibinfo  {journal} {arXiv preprint
  arXiv:1703.05426v1}\ } (\bibinfo {year} {2017})}\BibitemShut {NoStop}%
\bibitem [{\citenamefont {McKague}(2016)}]{McKague16}%
  \BibitemOpen
  \bibfield  {author} {\bibinfo {author} {\bibfnamefont {M.}~\bibnamefont
  {McKague}},\ }\href@noop {} {\bibfield  {journal} {\bibinfo  {journal} {New
  Journal of Physics}\ }\textbf {\bibinfo {volume} {18}},\ \bibinfo {pages}
  {045013} (\bibinfo {year} {2016})}\BibitemShut {NoStop}%
\bibitem [{\citenamefont {Cleve}\ and\ \citenamefont
  {Mittal}(2014)}]{Cleve2014}%
  \BibitemOpen
  \bibfield  {author} {\bibinfo {author} {\bibfnamefont {R.}~\bibnamefont
  {Cleve}}\ and\ \bibinfo {author} {\bibfnamefont {R.}~\bibnamefont {Mittal}},\
  }\enquote {\bibinfo {title} {Characterization of binary constraint system
  games},}\ in\ \href {\doibase 10.1007/978-3-662-43948-7_27} {\emph {\bibinfo
  {booktitle} {Automata, Languages, and Programming: 41st International
  Colloquium, ICALP 2014, Copenhagen, Denmark, July 8-11, 2014, Proceedings,
  Part I}}},\ \bibinfo {editor} {edited by\ \bibinfo {editor} {\bibfnamefont
  {J.}~\bibnamefont {Esparza}}, \bibinfo {editor} {\bibfnamefont
  {P.}~\bibnamefont {Fraigniaud}}, \bibinfo {editor} {\bibfnamefont
  {T.}~\bibnamefont {Husfeldt}}, \ and\ \bibinfo {editor} {\bibfnamefont
  {E.}~\bibnamefont {Koutsoupias}}}\ (\bibinfo  {publisher} {Springer Berlin
  Heidelberg},\ \bibinfo {address} {Berlin, Heidelberg},\ \bibinfo {year}
  {2014})\ pp.\ \bibinfo {pages} {320--331}\BibitemShut {NoStop}%
\bibitem [{\citenamefont {Arkhipov}(2012)}]{Arkhipov12}%
  \BibitemOpen
  \bibfield  {author} {\bibinfo {author} {\bibfnamefont {A.}~\bibnamefont
  {Arkhipov}},\ }\href@noop {} {\bibfield  {journal} {\bibinfo  {journal}
  {arXiv preprint arXiv:1209.3819}\ } (\bibinfo {year} {2012})}\BibitemShut
  {NoStop}%
\bibitem [{\citenamefont {Mermin}(1993)}]{Mermin93}%
  \BibitemOpen
  \bibfield  {author} {\bibinfo {author} {\bibfnamefont {N.~D.}\ \bibnamefont
  {Mermin}},\ }\href@noop {} {\bibfield  {journal} {\bibinfo  {journal}
  {Reviews of Modern Physics}\ }\textbf {\bibinfo {volume} {65}},\ \bibinfo
  {pages} {803} (\bibinfo {year} {1993})}\BibitemShut {NoStop}%
\bibitem [{\citenamefont {Ostrev}(2015)}]{Ostrev:2015}%
  \BibitemOpen
  \bibfield  {author} {\bibinfo {author} {\bibfnamefont {D.}~\bibnamefont
  {Ostrev}},\ }\href@noop {} {\bibfield  {journal} {\bibinfo  {journal} {arXiv
  preprint arXiv:1506.00607}\ } (\bibinfo {year} {2015})}\BibitemShut {NoStop}%
\bibitem [{\citenamefont {Ostrev}\ and\ \citenamefont
  {Vidick}(2016)}]{Ostrev:2016}%
  \BibitemOpen
  \bibfield  {author} {\bibinfo {author} {\bibfnamefont {D.}~\bibnamefont
  {Ostrev}}\ and\ \bibinfo {author} {\bibfnamefont {T.}~\bibnamefont
  {Vidick}},\ }\href@noop {} {\bibfield  {journal} {\bibinfo  {journal} {arXiv
  preprint arXiv:1609.01652}\ } (\bibinfo {year} {2016})}\BibitemShut {NoStop}%
\bibitem [{\citenamefont {Coladangelo}(2016)}]{coladangelo2016parallel}%
  \BibitemOpen
  \bibfield  {author} {\bibinfo {author} {\bibfnamefont {A.~W.}\ \bibnamefont
  {Coladangelo}},\ }\href@noop {} {\bibfield  {journal} {\bibinfo  {journal}
  {arXiv preprint arXiv:1609.03687}\ } (\bibinfo {year} {2016})}\BibitemShut
  {NoStop}%
\bibitem [{\citenamefont {Natarajan}\ and\ \citenamefont
  {Vidick}(2017)}]{natarajan2016robust}%
  \BibitemOpen
  \bibfield  {author} {\bibinfo {author} {\bibfnamefont {A.}~\bibnamefont
  {Natarajan}}\ and\ \bibinfo {author} {\bibfnamefont {T.}~\bibnamefont
  {Vidick}},\ }in\ \href {\doibase 10.1145/3055399.3055468} {\emph {\bibinfo
  {booktitle} {Proceedings of the 49th Annual ACM SIGACT Symposium on Theory of
  Computing}}},\ \bibinfo {series and number} {STOC 2017}\ (\bibinfo
  {publisher} {ACM},\ \bibinfo {address} {New York, NY, USA},\ \bibinfo {year}
  {2017})\ pp.\ \bibinfo {pages} {1003--1015}\BibitemShut {NoStop}%
\bibitem [{\citenamefont {McKague}(2017)}]{mckague2016selfCHSH}%
  \BibitemOpen
  \bibfield  {author} {\bibinfo {author} {\bibfnamefont {M.}~\bibnamefont
  {McKague}},\ }\href {\doibase 10.22331/q-2017-04-25-1} {\bibfield  {journal}
  {\bibinfo  {journal} {{Quantum}}\ }\textbf {\bibinfo {volume} {1}},\ \bibinfo
  {pages} {1} (\bibinfo {year} {2017})}\BibitemShut {NoStop}%
\bibitem [{\citenamefont {Chao}\ \emph {et~al.}(2016)\citenamefont {Chao},
  \citenamefont {Reichardt}, \citenamefont {Sutherland},\ and\ \citenamefont
  {Vidick}}]{chao2016test}%
  \BibitemOpen
  \bibfield  {author} {\bibinfo {author} {\bibfnamefont {R.}~\bibnamefont
  {Chao}}, \bibinfo {author} {\bibfnamefont {B.~W.}\ \bibnamefont {Reichardt}},
  \bibinfo {author} {\bibfnamefont {C.}~\bibnamefont {Sutherland}}, \ and\
  \bibinfo {author} {\bibfnamefont {T.}~\bibnamefont {Vidick}},\ }\href@noop {}
  {\bibfield  {journal} {\bibinfo  {journal} {arXiv preprint arXiv:1610.00771}\
  } (\bibinfo {year} {2016})}\BibitemShut {NoStop}%
\bibitem [{\citenamefont {Coudron}\ and\ \citenamefont
  {Natarajan}(2016)}]{coudron2016parallel}%
  \BibitemOpen
  \bibfield  {author} {\bibinfo {author} {\bibfnamefont {M.}~\bibnamefont
  {Coudron}}\ and\ \bibinfo {author} {\bibfnamefont {A.}~\bibnamefont
  {Natarajan}},\ }\href@noop {} {\bibfield  {journal} {\bibinfo  {journal}
  {arXiv preprint arXiv:1609.06306}\ } (\bibinfo {year} {2016})}\BibitemShut
  {NoStop}%
\end{thebibliography}%

\end{document}